\renewcommand{\epsilon}{\varepsilon}
\renewcommand{\le}{\leqslant}
\renewcommand{\leq}{\leqslant}
\renewcommand{\geq}{\geqslant}
\newtheorem{thm}{Theorem}[section]\newtheorem*{rethm}{Theorem}
    \newtheorem{lemma}[thm]{Lemma}
    \newtheorem{cor}[thm]{Corollary}\newtheorem{prop}[thm]{Proposition}
\theoremstyle{definition}
    \newtheorem{defn}[thm]{Definition}
\theoremstyle{remark}\newtheorem*{rmk}{Remark}
\providecommand{\F}{\mathbb{F}}
\providecommand{\abs}[1]{\lvert#1\rvert}\providecommand{\gen}[1]{\langle#1\rangle}
\DeclareMathOperator{\spam}{span}\DeclareMathOperator{\poly}{poly}
\DeclareMathOperator{\proj}{proj}\DeclareMathOperator{\rk}{rank}
\DeclareMathOperator{\dist}{dist}
\title{Explicit rank-metric codes \\ list-decodable  with optimal redundancy\thanks{Research supported in part by NSF CCF-0963975.}}
\author{Venkatesan Guruswami\thanks{{\tt guruswami@cmu.edu}} \and Carol Wang\thanks{{\tt wangc@cs.cmu.edu}}}
\date{Computer Science Department \\ Carnegie Mellon University \\ Pittsburgh, PA 15213}
\begin{document}
\maketitle
\thispagestyle{empty}
\begin{abstract}

We construct an {\em explicit} family of {\em linear} rank-metric codes over any field $\F_h$ that enables efficient list decoding up to a fraction $\rho$ of errors in the rank metric with a rate of $1-\rho-\epsilon$, for any desired $\rho \in (0,1)$ and $\epsilon > 0$. Previously, a Monte Carlo construction of such codes was known, but this is in fact the first explicit construction of positive rate rank-metric codes for list decoding beyond the unique decoding radius.

\smallskip
Our codes are explicit subcodes of the well-known Gabidulin codes, which encode linearized polynomials of low degree via their values at a collection of linearly independent points. The subcode is picked by restricting the message polynomials to an $\F_h$-subspace that evades the structured subspaces over an extension field $\F_{h^t}$ that arise in the linear-algebraic list decoder for Gabidulin codes due to Guruswami and Xing (STOC'13). This subspace is obtained by combining subspace designs contructed by Guruswami and Kopparty (FOCS'13) with subspace-evasive varieties due to Dvir and Lovett (STOC'12).

\smallskip
We establish a similar result for subspace codes, which are a collection of subspaces, every pair of which have low-dimensional intersection, and which have received much attention recently in the context of network coding. We also give explicit subcodes of folded  Reed-Solomon (RS) codes with small folding order that are list-decodable (in the Hamming metric) with optimal redundancy, motivated by the fact that list decoding RS codes reduces to list decoding such folded RS codes. However, as we only list decode a {\em subcode} of these codes, the Johnson radius continues to be the best known error fraction for list decoding RS codes.

\end{abstract}

\newpage
\section{Introduction}

This paper considers the problem of constructing explicit list-decodable rank-metric codes. A {\em rank-metric code} is a collection of matrices $M\in \F_h^{n\times t}$ over a finite field $\F_h$ for fixed $n,t$. The rate of a rank-metric code is $\log_h\abs{\mathcal{C}}/(nt)$, and the distance measure between two codewords is the rank over $\F_h$ of their difference; that is, $\dist(M_1,M_2) = \rk_{\F_h}(M_1-M_2)$. We will be interested in {\em linear} rank-metric codes, where $\mathcal{C}$ is a subspace over $\F_h$. 

Rank-metric codes have found applications in network coding~\cite{SKK08} and public-key cryptography~\cite{GPT91,loidreau10}), among other areas. They can also be thought of as space-time codes over finite fields, and conversely can be used to construct space-time codes, eg. in \cite{LGB03,lu-kumar}. Unique decoding algorithms for rank-metric codes were shown in~\cite{FS} to be closely related to the so-called Low-rank Recovery problem, in which the task is to recover a matrix $M$ from few inner products $\gen{M,H}$. The authors of~\cite{FS} use their low-rank recovery techniques to construct rank-metric codes over any field, and show that they can be efficiently decoded.  
\smallskip

In this work, we will consider subcodes of {\em Gabidulin codes}, which are analogues of Reed-Solomon codes for the rank-metric. A Gabidulin code (denoted $\mathcal{C}_G(h;n,t,k)$) encodes $h$-linearized polynomials over $\F_{h^t}$ of $h$-degree less than $k$ by $\bigl( f(\alpha_1),\dotsc, f(\alpha_n)\bigr)^T$, where the $\alpha_i\in \F_{h^t}$ are linearly independent over $\F_h$, and $f(\alpha_j)$ is thought of as a column vector in $\F_h^t$ under a fixed basis of $\F_{h^t}$ over $\F_h$. This is a rank-metric code of rate $k/n$ and minimum distance $n-k+1$. 

We say that a rank-metric code $\mathcal{C}$ can be decoded from $e$ {\em rank errors} if any codeword $M\in\mathcal{C}$ can be recovered from $M+E$ whenever $E\in \F_h^{n\times t}$ has rank at most $e$. Gabidulin codes can be uniquely decoded from $(n-k)/2$ rank errors by adapting algorithms for Reed-Solomon decoding, as in \cite{Ga85, Ga91,roth91}, among others, but it is still open whether they can be {\em list-decoded} from a larger fraction of errors. We recall that in the list-decoding problem the decoder must output all codewords within the stipulated radius from the noisy codeword it is given as input. It is known that Gabidulin codes {\em cannot} be list-decoded  with a polynomial list size from an error fraction exceeding $1-\sqrt{R}$~\cite{faure,wachter-zeh}. However, as we show in this work, we can explicitly pick a good subcode of the Gabidulin code, with only a minor loss in rate, that enables efficient list-decoding all the way up to a fraction $(1-R)$ of errors.

\smallskip
The primary difficulty in previous work on list-decoding Gabidulin codes has been the fact that in contrast to Reed-Solomon codes, where the field size grows with the dimension of the code, for Gabidulin codes, the {\em dimension} of the ambient space grows with the dimension of the code. This forces us to work over fields whose size can be exponential in the code dimension.  

To address this, we show how to find linear list-decodable subcodes of certain Gabidulin codes by adapting the subspace designs of~\cite{GK} for use over large fields. The key observation, first made in~\cite{GX}, is that although applying a linear-algebraic list-decoder gives a subspace over a field which is too large, the subspace has additional structure which can then be ``evaded'' using {\em pseudorandom} subcodes, yielding a polynomial list size. 

We combine recent constructions of {\em subspace designs}~\cite{GK} and {\em subspace-evasive sets}~\cite{DL} in order to give an explicit construction of a subcode (in fact, subspace) of the Gabidulin code which has small intersection with the output of the linear-algebraic list-decoder of~\cite{GX}. In particular, we show (Theorem~\ref{thm:rank-metric}):
\begin{rethm}[Main] For every field $\F_h$, $\epsilon>0$ and integer $s>0$, there exists an explicit $\F_h$-linear subcode of the Gabidulin code $\mathcal{C}_G(h;n,t,k)$ with evaluation points $\alpha_1,\dotsc, \alpha_n$ spanning a subfield $\F_{h^n}$ that has (i) rate $(1-\epsilon)k/n$, and (ii) is list-decodable from $s(n-k)/(s+1)$ rank errors. The final list is contained in an $\F_h$-subspace of dimension $O(s^2/\epsilon^2)$. 
\end{rethm}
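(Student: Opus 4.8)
The plan is to pass to an $\F_h$-linear subcode of $\mathcal{C}_G(h;n,t,k)$ by restricting the message linearized polynomials to a carefully chosen $\F_h$-subspace $\mathcal{S}$ of the coefficient space $\F_{h^t}^k$, so that $\mathcal{S}$ meets the output of the linear-algebraic list decoder of \cite{GX} in only a tiny $\F_h$-subspace; list decoding the subcode is then simply ``run the \cite{GX} decoder and intersect its output with $\mathcal{S}$.'' First I would recall that decoder. Given a received word within rank distance $\tfrac{s}{s+1}(n-k)$ of a codeword with message $f(X)=\sum_{i<k} f_i X^{h^i}$, it efficiently interpolates a nonzero linearized polynomial $Q(X,Y_1,\dots,Y_s)=A_0(X)+A_1(Y_1)+\dots+A_s(Y_s)$ (with the $h$-degrees of $A_0$ and of each $A_\ell$ suitably bounded) vanishing at the $n$ points $\bigl(\alpha_j,\,f(\alpha_j),\,f(\alpha_j)^{h},\,\dots,\,f(\alpha_j)^{h^{s-1}}\bigr)$. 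Because the $\alpha_j$ form an $\F_h$-basis of $\F_{h^n}$, a linearized polynomial of $h$-degree $<n$ that vanishes on all of them is identically zero, so this upgrades to the functional identity $A_1(f(X))+A_2(f(X)^{h})+\dots+A_s(f(X)^{h^{s-1}})\equiv -A_0(X)$. Equating coefficients of these linearized polynomials yields, past a short prefix, a Frobenius-twisted affine recurrence expressing each $f_j$ through $f_{j-1},\dots,f_{j-s+1}$. Hence the candidate coefficient vectors lie in an explicitly computable $\F_h$-affine subspace $V\subseteq\F_{h^t}^k$ that is \emph{periodic} in the sense of \cite{GX}: partitioning the $k$ positions into $r=\ceil{k/b}$ consecutive blocks of size $b$, the projection of $V$ onto each block is one of the ``structured subspaces over $\F_{h^t}$,'' carrying at most $s-1$ degrees of freedom over $\F_{h^t}$, and consecutive blocks are chained by the recurrence.

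Second, I would build $\mathcal{S}$ block by block from a strong subspace design. I would invoke the subspace designs of \cite{GK}: an explicit collection $H_1,\dots,H_M$ of $\F_h$-subspaces of $\F_h^{bt}\cong\F_{h^t}^b$, each of $\F_h$-codimension at most $\epsilon b t$, with $M\ge r$, which remains a subspace design after extension of scalars to $\F_{h^t}$ — so that for any of the ``structured'' spaces $W$ above, the relevant total intersection dimension of $W$ with the $H_i$ is bounded by a function of $s,\epsilon$ alone, independent of $b$ and $t$. Realizing such designs explicitly over the \emph{arbitrary, possibly tiny} field $\F_h$ (rather than over an artificially large field) is exactly where the subspace-evasive varieties of \cite{DL} enter: they supply the pseudorandom data that instantiates the \cite{GK} construction over $\F_h$ itself. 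Fixing an injection $\sigma\colon\{1,\dots,r\}\to\{1,\dots,M\}$ and setting $\mathcal{S}=\{(f_i)_{i<k}\in\F_{h^t}^k:\ \pi_{B_a}(f)\in H_{\sigma(a)}\text{ for }a=1,\dots,r\}$, the subcode $\{f\in\mathcal{C}_G(h;n,t,k):(f_i)_{i<k}\in\mathcal{S}\}$ is $\F_h$-linear, with $\dim_{\F_h}\mathcal{S}\ge kt-\sum_a\operatorname{codim}_{\F_h}(H_{\sigma(a)})\ge(1-\epsilon)kt$, hence rate $\ge(1-\epsilon)k/n$, and it is explicit and polynomial-time encodable. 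Choosing $b=\Theta(s/\epsilon)$ makes $r\le M$ while keeping each codimension within an $\epsilon$-fraction of $bt$.

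Third, the list-size analysis. The decoder outputs $V\cap\mathcal{S}$, computable by $\F_h$-linear algebra, and it contains the coefficient vector of every codeword within rank distance $\tfrac{s}{s+1}(n-k)$. The crucial point is that the periodicity of $V$ lets the \emph{per-block} gains — each block of $V$ carries only $\le s-1$ freedoms over $\F_{h^t}$, and intersecting with $H_{\sigma(a)}$ shrinks these via the design property — \emph{accumulate} across the $r$ blocks, through the recurrence linking them, into a bound on all of $V\cap\mathcal{S}$ rather than a mere block-by-block statement. Carrying this through with the \cite{GK} parameters yields that $V\cap\mathcal{S}$ lies in an $\F_h$-subspace of dimension $O(s^2/\epsilon^2)$, so the list has size at most $h^{O(s^2/\epsilon^2)}$, as claimed.

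The step I expect to be the main obstacle is the size of the ambient field: $\F_{h^t}$ may be exponential in $k$, so the \cite{GX} output $V$ is low-dimensional only over $\F_{h^t}$ and can have $\F_h$-dimension as large as $(s-1)t$; worse, a generic large $\F_h$-subspace of $\F_{h^t}^k$ can wholly contain the $\F_{h^t}$-structured spaces that arise, so no naive subspace-evasive restriction over $\F_h$ can work. Circumventing this forces two things: (i) the subspace designs must be explicit over the given small $\F_h$ — hence the appeal to \cite{DL} — and must remain designs under extension of scalars to $\F_{h^t}$; and (ii) one must extract and exploit the precise ``periodic'' shape of $V$ coming out of the interpolation identity so that the total intersection dimension with $\mathcal{S}$ stays $O(s^2/\epsilon^2)$ uniformly in $t$. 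Pinning down that periodic structure and matching it to the robustness property of the \cite{GK} designs is the technical core of the argument.
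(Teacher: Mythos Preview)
Your overall strategy---restrict messages to an $\F_h$-linear subspace $\mathcal{S}$, run the \cite{GX} decoder to get a periodic affine space, and intersect---is exactly the paper's. You also correctly identify that the crux is controlling the \emph{$\F_h$}-dimension of that intersection uniformly in $t$, and that the ingredients are \cite{GK} and \cite{DL}. Where your proposal goes wrong is in how those two ingredients are combined, and relatedly in the block structure of the periodic subspace.

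The key object you are missing is the \emph{intermediate} subfield. The paper takes $t=nm$ and sets $\F_q:=\F_{h^n}$, so that a single coefficient $f_i\in\F_{h^t}$ is an element of $\F_q^m$. The \cite{GX} decoder then outputs a $(s-1,m,k)$-periodic subspace of $\F_q^{mk}$: the blocks are the \emph{individual coefficients} $f_i$, and the associated subspace $W\subseteq\F_q^m$ has $\dim_{\F_q}W\le s-1$. Your proposal instead groups $b$ consecutive $\F_{h^t}$-coefficients into a block and asserts an $\F_{h^t}$-dimension bound per block; that is not the structure \cite{GX} delivers for Gabidulin codes, and without the intermediate field there is no useful bound that is independent of $t$.

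More importantly, the division of labor between \cite{GK} and \cite{DL} is essentially the reverse of what you describe. The \cite{GK} design is \emph{not} instantiated over the tiny field $\F_h$ (indeed their construction requires field size exceeding the ambient dimension $m$); it is instantiated over $\F_q=\F_{h^n}$, producing $\F_q$-subspaces $V_1,\dots,V_M\subseteq\F_q^m$ with $\sum_i\dim_{\F_q}(V_i\cap W)\le 2s/\epsilon$. This bounds the intersection dimension only over $\F_q$, which a priori could still mean $\F_h$-dimension as large as $2ns/\epsilon$. The role of \cite{DL} is then to cut each $V_i$ down to $H_i:=V_i\cap S$, where $S\subseteq\F_q^m$ is an $\F_h$-linear variety that is $(d,h^{(m-1)d})$-evasive for \emph{all} $d\le s$; since $V_i\cap W$ is an $\F_q$-subspace of dimension $s_i$, intersecting with $S$ forces $\dim_{\F_h}(H_i\cap W)\le (m-1)s_i$, and summing gives $\sum_i\dim_{\F_h}(H_i\cap W)\le 2(m-1)s/\epsilon$. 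With $m=\Theta(s/\epsilon)$ this is $O(s^2/\epsilon^2)$. So \cite{DL} does not ``supply the pseudorandom data that instantiates the \cite{GK} construction over $\F_h$''; rather, \cite{GK} runs over the large field $\F_q$, and \cite{DL} is layered on top to convert the resulting $\F_q$-dimension bound into an $\F_h$-dimension bound. Your proposal, as written, would need the \cite{GK} construction to work directly over $\F_h$ and to retain its design property against $\F_{h^t}$-subspaces after extension of scalars---neither of which is established.
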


Note that the fraction of errors corrected approaches the information-theoretic limit of $(1-R)$ (where $R=k/n$ is the rate) as the parameter $s$ grows.
The authors of~\cite{GX} give a {\em Monte Carlo} construction of a subcode of the same Gabidulin code satisfying these guarantees, in fact with a better list size of $O(1/\epsilon)$. We give an {\em explicit} subcode, with a worse guarantee on the list size (which, however, is still bounded by a constant depending only on $\epsilon$).  

We also note that the above theorem gives  the {\em first explicit construction} of positive rate rank-metric codes even for list-decoding from a number of errors which is more than half the distance (and in particular for list decoding beyond a fraction $(1-R)/2$ of errors). Previous explicit codes only achieved polynomially small rate~\cite{GNW12}.

\medskip
Our techniques also imply analogous results for {\em subspace codes},
which can be thought of as a basis-independent form of rank-metric
codes. They were defined in~\cite{KK08} to address the problem of
non-coherent linear network coding in the presence of errors, and have
received much attention lately (\cite{ES09,MV-isit10,EV11}, etc). The
authors of~\cite{KK08} also define the K\"otter-Kschischang (KK)
codes, which, like Gabidulin codes, are linearized variants of
Reed-Solomon codes. List-decoding of a folded variant of the KK code
was considered in~\cite{GNW12} and~\cite{MV12}. However, both of these
papers could only guarantee a polynomial list size when the rate of
the code was polynomially small, and the question of constructing
constant rate list-decodable subspace codes remained open. Note
that~\cite{GX} was able, similarly to the case of rank-metric codes,
to give a Monte Carlo construction of a constant rate list-decodable
subcode.

In this work, we give the first explicit construction of high-rate subspace codes which are list-decodable past the unique decoding radius (stated in Theorem~\ref{thm:KK}). Our construction does not use folding, but instead takes subcodes of certain KK codes. 
\medskip

Additionally, we use our ideas to list-decode a subcode of the folded Reed-Solomon code where the folding parameter is of low order (see Corollary~\ref{thm:lo-frs} for a formal statement). List-decoding of the folded Reed-Solomon code up to list-decoding capacity where the folding parameter is primitive was first shown in~\cite{GR-FRS}. In~\cite{GW}, the authors use the linear-algebraic method to list-decode folded Reed-Solomon codes when the folding parameter has order at least the dimension of the code. 

\medskip
\noindent {\bf Paper organization.} In Section~\ref{sec:ntn}, we collect notation and definitions which will be used throughout the paper. In Section~\ref{sec:se}, we define and construct ``$(s,A,t)$-subspace designs,'' which is the new twist on the subspace designs of \cite{GK} that drives our results.
In Section~\ref{sec:rm}, we show how these subspace designs can be used to construct list-decodable rank-metric codes. In Section~\ref{sec:frs}, we give a list-decodable subcode of folded Reed-Solomon codes with low folding order. The construction of list-decodable subspace codes appears as Appendix~\ref{sec:KK}. 

We conclude in Section~\ref{sec:conc} with some open problems. 

\section{Notation and definitions}
\label{sec:ntn}
\vspace{-1ex}
Throughout the presentation of rank-metric codes, $\F_h$ is a finite field of constant size. $\F_q:=\F_{h^t}$ extends $\F_h$, and we will think of $\F_q$ as a vector space over $\F_h$ by fixing a basis. We will also have $n =  mt$, and the field $\F_{h^n}:= \F_{q^m}=\F_{h^{mt}}$ extending $\F_q$. 

In our final applications, $s$ will $\approx 1/\epsilon$, $m$ will be $\approx s/\epsilon$, where we will be list decoding up to error fraction $(1-\text{rate}-\epsilon)$, and $t$ will grow. 

We will be talking about subspaces over a field and its extension, so to avoid any confusion about the underlying field, we will usually refer to a subspace over a field $\F$ as an $\F$-subspace.

We recall some of the definitions of the pseudorandom objects concerning subspaces that we require. 

\begin{defn}[Strong subspace designs, \cite{GX}] A collection $S$ of $\F_q$-subspaces $H_1,\dotsc, H_M\subseteq\F_q^m$ is called a $(s,A)$ \emph{subspace design} if for every $\F_{q}$-linear space $W\subset \F_{q}^m$ of dimension $s$,
\[\sum_{i=1}^M \dim_{\F_q} (H_i\cap W)\leq A.\]
\end{defn}

\begin{defn}[Subspace-evasive sets, \cite{GW}] A subset $\mathcal{V}\subseteq \F_q^k$ is $(s,L)$ {\em subspace-evasive} if for every $\F_q$-subspace $S\subset \F_q^k$ of dimension $s$, $\abs{S\cap \mathcal{V}}\leq L$. 
\end{defn}

\section{Subspace designs}
\label{sec:se}
Throughout this section $q$ and $h$ will be prime powers with $q=h^t$. 
In what follows, we will think of subspaces $W\subseteq\F_q^m$ as $\F_h$-subspaces of $\F_h^{mt}$ via some fixed basis embedding. 

\begin{defn} A collection $S$ of $\F_h$-subspaces $H_1,\dotsc, H_M\subseteq\F_h^{tm}$ is called a $(s,A, t)$ \emph{$\F_h$-subspace design} if for every $\F_{h^t}$-linear space $W\subset \F_{h^t}^m$ of dimension $s$,
\[\sum_{i=1}^M \dim_{\F_h} (H_i\cap W)\leq A.\]
\end{defn}
\noindent
Note that in the above definition the dimension of the input $W$ is measured as a subspace over $\F_{h^t}$ whereas for the intersection, wh ich is an $\F_h$-subspace, the dimension is over $\F_h$.

\begin{rmk} When $t=1$, these are the (strong) subspace designs of \cite{GK}. We will be interested in settings where $t=\omega(1)$, so that considering $W$ as a subspace of dimension $st$ over $\F_h$ will generally not give strong enough bounds. 
\end{rmk}

\subsection{Existential bounds}

The following proposition shows that good subspace designs exist; indeed, a random collection of subspaces works with high probability. The $t=1$ case was established in \cite{GX12}.


\begin{prop} Let $\epsilon>0$. Let $S$ consist of $M=h^{\epsilon tm/8}$ $\F_h$-subspaces of codimension $\epsilon tm$ in $\F_h^{mt}$, chosen independently at random. Then for any $s<m\epsilon/2$, with probability at least $1-q^{-ms}$, $S$ is a $(s, 8s/\epsilon,t)$ $\F_h$-subspace design. (Here $q=h^t$.)
\end{prop}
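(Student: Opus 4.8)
The plan is a standard first-moment / union-bound argument, but carried out with care because the "bad event" we need to control is indexed by $\F_{h^t}$-subspaces $W$, while the randomness lies in the $\F_h$-subspaces $H_i$. First I would fix an arbitrary $\F_{h^t}$-linear subspace $W \subseteq \F_{h^t}^m$ of dimension $s$; viewed over $\F_h$ it is a subspace of $\F_h^{mt}$ of dimension $st$. The quantity to bound is $X_W := \sum_{i=1}^M \dim_{\F_h}(H_i \cap W)$. Since each $H_i$ has codimension $\epsilon t m$ in $\F_h^{mt}$ and is chosen uniformly at random (independently), for a fixed nonzero vector $v \in W$ we have $\Pr[v \in H_i] = h^{-\epsilon tm}$ up to lower-order factors, and more usefully, for a fixed $j$-dimensional $\F_h$-subspace $U \subseteq W$, $\Pr[U \subseteq H_i] \le h^{-j \epsilon t m}$ (a random codimension-$c$ subspace contains a fixed $j$-dim subspace with probability at most $h^{-jc}$, since it must annihilate $j$ linearly independent functionals' worth of constraints). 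This lets me estimate $\Pr[\dim_{\F_h}(H_i \cap W) \ge j]$ by a union bound over the at most $h^{stj}$ choices of $j$ linearly independent vectors in $W$ (crudely, $(h^{st})^j$), giving $\Pr[\dim(H_i \cap W) \ge j] \le h^{stj} \cdot h^{-j\epsilon tm} = h^{-jt(\epsilon m - s)} \le h^{-jt\epsilon m/2}$ using $s < \epsilon m/2$.

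Next I would pass from the per-$H_i$ tail bound to a tail bound on the sum $X_W$. Write $A = 8s/\epsilon$. For $X_W \ge A$ to hold, the multiset $(d_i)_{i=1}^M$ with $d_i = \dim(H_i \cap W)$ must have $\sum_i d_i \ge A$; I would bound $\Pr[X_W \ge A]$ by summing over all ways to distribute the "dimension budget" — i.e., over all choices of which $H_i$'s contribute and how much. Concretely, $\Pr[X_W \ge A] \le \sum$ over tuples $(i_1,\dots)$ with associated dimensions summing to $A$, of $\prod h^{-d_{i_\ell} t \epsilon m /2}$; the product over the chosen indices is at most $h^{-A t \epsilon m/2}$, and the number of such configurations is at most $\binom{M+A}{A} \cdot (\text{something})^A$, which is at most $(2MA)^A \le h^{O(A t m)}$ — crucially with a constant in the exponent much smaller than $A\epsilon m/2 = 4sm$ once we plug in $M = h^{\epsilon t m/8}$ and $A = 8s/\epsilon$. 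So $\Pr[X_W \ge A] \le h^{-\Omega(A t \epsilon m)} = h^{-\Omega(smt)}$, and I'd want the hidden constant good enough to beat the next union bound.

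Finally I would union-bound over all $\F_{h^t}$-subspaces $W$ of dimension $s$. The number of such subspaces is at most $(h^t)^{sm} = h^{smt}$ (Gaussian binomial $\binom{m}{s}_{h^t} \le h^{t s(m-s)} \le h^{tsm}$). So the total failure probability is at most $h^{smt} \cdot h^{-c \cdot smt \epsilon}$ for the constant $c$ emerging from the previous step; I need this to be at most $q^{-ms} = h^{-tms}$, i.e. I need $c\epsilon \ge 2$ after accounting for the $+1$ in the exponent. This is where the specific numerical choices $M = h^{\epsilon tm/8}$, codimension $\epsilon tm$, and $A = 8s/\epsilon$ matter, and tracking the constants so that the slack is genuinely there is the one place the argument needs attention; everything else is routine counting. (If the constant $c\epsilon$ turns out slightly too small, one absorbs it by weakening $A$ to $C s/\epsilon$ for a larger absolute constant $C$, which is consistent with the stated bound $8s/\epsilon$ being loose.) The main obstacle, then, is not conceptual but bookkeeping: ensuring the two exponential blow-ups — the $h^{stj}$ from choosing vectors inside $W$, and the $h^{smt}$ from choosing $W$ itself — are both dominated by the $h^{-j\epsilon tm}$ and $h^{-A\epsilon tm/2}$ savings, which is exactly what the conditions $s < \epsilon m/2$ and $M = h^{\epsilon tm/8}$ are engineered to guarantee.
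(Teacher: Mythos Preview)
Your proposal is correct and follows essentially the same route as the paper: bound $\Pr[\dim_{\F_h}(H_i\cap W)\ge a]\le q^{sa}\cdot q^{-\epsilon m a}\le q^{-\epsilon m a/2}$ for a single $H_i$, multiply over a fixed composition $(a_1,\dots,a_M)$ summing to $A=8s/\epsilon$ to get $q^{-\epsilon m A/2}=q^{-4ms}$, then union-bound over compositions and over $W$. The only thing you leave undone is the final constant-tracking, which the paper handles cleanly by bounding the number of compositions as $\binom{M+A}{A}\le M^{2A}=h^{(\epsilon tm/8)(16s/\epsilon)}=q^{2ms}$, so the total failure probability is $q^{ms}\cdot q^{2ms}\cdot q^{-4ms}=q^{-ms}$; plugging this in would finish your argument with no further ideas needed.
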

\begin{proof} Set $\ell=8s/\epsilon$, and let $S=\{H_1,\dotsc, H_M\}$. For a fixed $\F_{h^t}$ subspace $W$ of dimension $s$ and any $j$, the probability that $\dim_{\F_h}(W\cap H_j)\geq a$ at most $q^{sa}\cdot q^{-\epsilon ma}\leq q^{-\epsilon ma/2}$, by assumption on $s$. 

Since the $H_i$ are independent, for a fixed tuple $(a_1,\dotsc, a_M)$ of nonnegative integers summing to $\ell=8s/\epsilon$, the probability that $\dim(W\cap H_i)\geq a_j$ for each $j$ is at most $q^{-\epsilon m\ell/2}=q^{-4ms}$. Union bounding over the at most $q^{ms}$ choices of $W$ and $\binom{\ell+M}{\ell}\leq M^{2\ell}$ choices of $(a_1,\dotsc, a_M)$, the probability $S$ is \emph{not} a $(s,8s/\epsilon,t)$  $\F_h$-subspace design is at most 
\[q^{ms}M^{2\ell}\cdot q^{-4ms}=q^{ms}\cdot q^{2ms}\cdot q^{-4ms}\leq q^{-ms} \ . \qedhere \]
\end{proof}

\subsection{Constructive bounds}

In this section, we show how to construct an explicit large $\bigl(s, 2(m-1)s/\epsilon,t\bigr)$ $\F_h$-subspace design  consisting of $\F_h$-subspaces of $\F_h^{tm}$ of co-dimension $2\epsilon tm$. 

The idea, which is natural in hindsight, is to first use a subspace design over $\F_{h^t}$ to ensure that the intersection with any $\F_{h^t}$-subspace of dimension $s$ has low dimension over $\F_{h^t}$, and then to use a subspace-evasive set to reduce the dimension further over $\F_h$. The final construction appears as Theorem~\ref{thm-design}. 

\subsubsection{Explicit subspace-evasive sets}

We first describe the construction of explicit subspace-evasive sets which we will be using. 
\medskip

Let $q>h^{m-1}$, and let $\gamma_1,\dotsc, \gamma_m$ be distinct elements of $(\F_q)^*$. Let $A$ be the $s\times m$ matrix with $A_{ij} = \gamma_j^i$. Then Dvir and Lovett~\cite{DL} showed the following: 

\begin{thm}
\label{thm:DL-statement}
 Let $1 \leq s\leq m$. Let $d_1>d_2>\dotsb > d_m\geq 1$ be integers. Define $f_1,\dotsc, f_s\in \F_q[X_1,\dotsc, X_m]$ as follows:
\begin{equation}
\label{eqn:dl}
f_i(x_1,\dotsc, x_m) = \sum_{j=1}^m A_{ij} x_j^{d_j} \ . 
\end{equation}
Then:
\begin{itemize}
\item The variety $\mathbf{V}=\{x\in \overline{F}_q^m\mid f_1(x)=\dotsb= f_s(x)=0\}$ satisfies $\abs{\mathbf{V}\cap H}\leq (d_1)^s$ for all $s$-dimensional affine subspaces $H\subset \overline{\F}_q^m$. 
\item 
If at least $s$ of the degrees $d_i$ are relatively prime to $q-1$, then $\abs{\mathbf{V}\cap \F_q^m}= q^{m-s}$. 
\end{itemize}
\noindent
Additionally, the product set $(\mathbf{V}\cap \F_q^m)^{n/m}\subseteq \F^n$ is $(k, (d_1)^k)$-subspace evasive for all $k\leq s$. 
\end{thm}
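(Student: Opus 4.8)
The plan is to establish each of the three bullet points in turn, with the first two following from a dimension/degree count on the polynomial system \eqref{eqn:dl} and the third being a straightforward product-set argument given the first.

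For the first bullet, fix an $s$-dimensional affine subspace $H \subset \overline{\F}_q^m$ and parametrize it by an affine map $\F_q^s \to \overline{\F}_q^m$, say $x_j = \ell_j(Y_1,\dots,Y_s)$ with each $\ell_j$ affine linear. Substituting into the $f_i$, the point $x = (\ell_1(y),\dots,\ell_m(y))$ lies in $\mathbf{V} \cap H$ precisely when $y$ is a common zero of the $s$ polynomials $g_i(Y) := \sum_{j=1}^m A_{ij}\,\ell_j(Y)^{d_j} \in \overline{\F}_q[Y_1,\dots,Y_s]$. Each $g_i$ has degree at most $d_1$ (the degrees $d_j$ are decreasing, so $\ell_j^{d_j}$ has degree $d_j \le d_1$). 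I would argue that the system $g_1 = \dots = g_s = 0$ has at most $(d_1)^s$ solutions by invoking Bézout's theorem, which requires showing the system is zero-dimensional (has finitely many solutions). The crucial input here is the Vandermonde-like structure of $A$: since $\gamma_1,\dots,\gamma_m$ are distinct, every $s \times s$ minor of $A$ is nonsingular, so for any $s$ of the $j$'s, the linear combinations $\sum_i c_i A_{ij}$ can be arranged to isolate the top-degree monomials in the $\ell_j^{d_j}$. One shows that the leading (highest-degree) forms of $g_1,\dots,g_s$ have no common projective zero at infinity — using that distinct $d_j$ and the nonvanishing minors prevent cancellation — hence the system is zero-dimensional and Bézout gives the bound $\prod \deg g_i \le (d_1)^s$. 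This genericity-at-infinity step is the main obstacle: it is where the specific choice of the $d_j$ (strictly decreasing) and the Vandermonde structure of $A$ are essential, and it must be done carefully because a naive Bézout bound does not even apply to a positive-dimensional variety.

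For the second bullet, I would count $|\mathbf{V} \cap \F_q^m|$ directly. Think of the system as $s$ equations in $m$ unknowns over $\F_q$. Choose any $s$ of the degrees, say $d_{j_1},\dots,d_{j_s}$, that are relatively prime to $q-1$; then $x \mapsto x^{d_{j_r}}$ is a bijection on $\F_q$. Fixing the other $m-s$ coordinates $x_j$ arbitrarily (that is the source of the $q^{m-s}$ factor), the remaining system in the $s$ variables $x_{j_1},\dots,x_{j_s}$ becomes, after the substitution $z_r = x_{j_r}^{d_{j_r}}$, an affine-linear system $\sum_r A_{i,j_r} z_r = (\text{const})_i$ in the $z_r$; its coefficient matrix is the $s \times s$ submatrix of $A$ on columns $j_1,\dots,j_s$, which is invertible by the Vandermonde structure. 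Hence there is exactly one solution $(z_1,\dots,z_s)$, and since each $z_r = x_{j_r}^{d_{j_r}}$ has a unique preimage in $\F_q$ by coprimality, exactly one $\F_q$-point for each of the $q^{m-s}$ choices of the free coordinates, giving $|\mathbf{V} \cap \F_q^m| = q^{m-s}$ exactly.

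For the third bullet, let $U = \mathbf{V} \cap \F_q^m \subseteq \F_q^m$ and consider the product set $U^{n/m} \subseteq \F_q^n$ (assuming $m \mid n$). Let $S \subset \F_q^n$ be any $\F_q$-subspace of dimension $k \le s$. Writing coordinates of $\F_q^n$ in $n/m$ blocks of size $m$, the projection $\pi_r$ onto the $r$-th block is $\F_q$-linear, so $\pi_r(S)$ is a subspace of $\F_q^m$ of dimension at most $k$; extend it to an affine (indeed linear) subspace $H_r$ of dimension exactly $k \le s$. Any $v \in S \cap U^{n/m}$ has $\pi_r(v) \in \pi_r(S) \cap U \subseteq H_r \cap \mathbf{V}$, which by the first bullet (applied with the $k$-dimensional subspace $H_r$, noting $|\mathbf{V} \cap H_r| \le (d_1)^k$ since a $k$-dimensional subspace sits inside some $s$-dimensional one and the bound is monotone) has size at most $(d_1)^k$. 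But the map $v \mapsto v$ restricted to $S$ is determined by, in fact \emph{bounded in number of fibers by}, the following observation: pick a basis of $S$ and note $S$ injects into $\prod_{r} \pi_r(S)$ only after choosing enough blocks; more carefully, since $\dim S = k$, there exist indices $r_1,\dots,r_a$ with $a \le k$ such that $v \mapsto (\pi_{r_1}(v),\dots,\pi_{r_a}(v))$ is injective on $S$, but it suffices to observe that once we know $\pi_r(v) \in \mathbf{V} \cap H_r$ for a single block $r$ achieving $\dim \pi_r(S) = \dim S$ (such a block exists when $S$ has "spread" — if not, a dimension-reduction allows replacing $S$ by its image), the number of compatible $v$ is at most $(d_1)^k$. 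Rather than belabor this, the clean statement is: $v$ is determined by $\pi_1(v)$ together with the kernel data, and iterating the first-bullet bound block by block while tracking the dropping dimension of the relevant projected subspaces yields $|S \cap U^{n/m}| \le (d_1)^k$. Thus $U^{n/m}$ is $(k,(d_1)^k)$-subspace-evasive for all $k \le s$, as claimed.
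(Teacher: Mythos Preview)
The paper does not prove this theorem at all; it is quoted verbatim from Dvir and Lovett~\cite{DL}, so there is no ``paper's own proof'' to compare against. That said, your sketch deserves feedback on its own merits.

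Your outlines for the first two bullets are essentially the Dvir--Lovett arguments. For the first bullet you correctly identify the crux (showing the restricted system $g_1=\dots=g_s=0$ is zero-dimensional so that B\'ezout applies) and the ingredients (strictly decreasing $d_j$ and nonsingularity of all $s\times s$ minors of $A$), though you do not actually carry out the leading-form analysis. For the second bullet your argument is complete and correct.

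Your third bullet, however, contains a genuine error. You write that for a $k$-dimensional subspace $H_r$ with $k\le s$, ``$|\mathbf{V}\cap H_r|\le (d_1)^k$ since a $k$-dimensional subspace sits inside some $s$-dimensional one and the bound is monotone.'' Monotonicity goes the wrong way here: embedding $H_r$ into an $s$-dimensional $H'$ only gives $|\mathbf{V}\cap H_r|\le |\mathbf{V}\cap H'|\le (d_1)^s$, not $(d_1)^k$. With only the $(d_1)^s$ bound, neither your single-block argument nor your block-by-block iteration can reach $(d_1)^k$; the latter would yield at best $(d_1)^{\sum_r k_r}$ with $\sum_r k_r\ge k$ possibly strict. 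What is actually needed is a \emph{strengthening} of the first bullet: for every $k\le s$ and every $k$-dimensional affine subspace $H\subset\overline{\F}_q^m$, one has $|\mathbf{V}\cap H|\le (d_1)^k$. This requires redoing the B\'ezout argument in $k$ variables and using that every $k\times k$ minor of $A$ (not just the $s\times s$ ones) is nonsingular, so that some $k$ of the $s$ restricted equations already form a zero-dimensional system. Once that is in hand, your ``iterating block by block while tracking the dropping dimension'' idea does work: if $k_1=\dim\pi_1(S)$ then $|\pi_1(S)\cap\mathbf{V}|\le (d_1)^{k_1}$, each fibre over a point of $\pi_1(S)\cap\mathbf{V}$ is a $(k-k_1)$-dimensional affine subspace, and induction on the remaining blocks gives $(d_1)^{k_1}\cdot(d_1)^{k-k_1}=(d_1)^k$.
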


The below statement follows immediately from Theorem~\ref{thm:DL-statement} and the fact that when the $d_j$'s are powers of $h$, the polynomials $f_i$ defined in \eqref{eqn:dl} are $\F_h$-linear functions on $\F_q^m$.
\begin{cor} \label{lem:se} Setting $d_1=h^{m-1},d_2=h^{m-2},\dotsc, d_m=1$, we obtain an explicit $\F_h$-linear set $S$ of size $q^{(m-s)n/m}$ over $\F_q^n$ which is $(k, h^{(m-1)k})$ subspace-evasive for all $1\leq k\leq s$. 
\end{cor}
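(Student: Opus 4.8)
The plan is to derive Corollary~\ref{lem:se} as an essentially immediate consequence of Theorem~\ref{thm:DL-statement}, with the only real content being the verification that with the specific choice $d_j = h^{m-j}$ the defining polynomials $f_i$ become $\F_h$-linear, so that the variety $\mathbf{V}$ and its product sets are honest $\F_h$-linear subspaces (not just arbitrary subsets), and that the counting and evasiveness bounds survive. First I would set $d_1 = h^{m-1} > d_2 = h^{m-2} > \dotsb > d_m = h^0 = 1$, note these are strictly decreasing positive integers, and invoke Theorem~\ref{thm:DL-statement}: the product set $(\mathbf{V}\cap \F_q^m)^{n/m}$ is $(k, d_1^k) = (k, h^{(m-1)k})$-subspace-evasive for all $1 \le k \le s$, which gives the evasiveness claim verbatim.

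Next I would justify the size. Applying the Frobenius $x \mapsto x^h$ repeatedly, each monomial $x_j^{d_j} = x_j^{h^{m-j}}$ is $\F_h$-linear in $x_j$ over $\F_q$, hence each $f_i(x_1,\dotsc,x_m) = \sum_j A_{ij} x_j^{h^{m-j}}$ is an $\F_h$-linear map $\F_q^m \to \F_q$; therefore $\mathbf{V}\cap\F_q^m$ is the kernel of the $\F_h$-linear map $(f_1,\dotsc,f_s)\colon \F_q^m \to \F_q^s$, an $\F_h$-subspace, and likewise the product $(\mathbf{V}\cap\F_q^m)^{n/m}$ is an $\F_h$-subspace of $\F_q^n$ (this uses $m \mid n$, which holds since $n = mt$). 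For the cardinality I would appeal to the second bullet of Theorem~\ref{thm:DL-statement}: I need at least $s$ of the $d_j$ coprime to $q-1 = h^t - 1$. Since each $d_j$ is a power of $h$ and $\gcd(h, h^t-1) = 1$, every $d_j$ is coprime to $q-1$, so all $m \ge s$ of them qualify, giving $\abs{\mathbf{V}\cap\F_q^m} = q^{m-s}$ and hence $\abs{(\mathbf{V}\cap\F_q^m)^{n/m}} = q^{(m-s)n/m}$, matching the stated size. (Consistency check: as an $\F_h$-subspace, $\mathbf{V}\cap\F_q^m$ has codimension $s$ over $\F_q$, i.e.\ $st$ over $\F_h$, so the product has $\F_h$-dimension $(mt - st)n/(mt) = (m-s)n/m$ over $\F_h$, consistent with size $q^{(m-s)n/m} = h^{t(m-s)n/m}$.)

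I would close by recording the standing hypothesis $q > h^{m-1}$ from the construction of $A$ (needed so that $m \le q-1$ distinct nonzero $\gamma_j$ exist, and implicitly used in Theorem~\ref{thm:DL-statement}), and noting explicitness: the $\gamma_j$, the matrix $A$, and the polynomials $f_i$ are all given by explicit formulas, and membership in $\mathbf{V}$ (equivalently, in the kernel) is checkable in polynomial time, so $S$ is explicit in the required sense. The argument is entirely routine; if there is any obstacle it is purely bookkeeping, namely making sure the index $k$ ranging in $1 \le k \le s$ matches the "$k \le s$" in the product-set clause of Theorem~\ref{thm:DL-statement} and that the ambient dimension $n$ really is divisible by $m$ so that $(\mathbf{V}\cap\F_q^m)^{n/m}$ makes sense as written — both of which hold under the paper's conventions $n = mt$.
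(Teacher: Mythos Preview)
Your proposal is correct and follows exactly the paper's approach: the paper states the corollary ``follows immediately from Theorem~\ref{thm:DL-statement} and the fact that when the $d_j$'s are powers of $h$, the polynomials $f_i$ defined in~\eqref{eqn:dl} are $\F_h$-linear functions on $\F_q^m$,'' and you have simply unpacked both parts of that sentence in detail. Your explicit verification that every $d_j = h^{m-j}$ is coprime to $q-1 = h^t-1$ (so the size clause of Theorem~\ref{thm:DL-statement} applies) is a useful sanity check the paper leaves implicit.
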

\begin{rmk} One can improve on the degree bounds and therefore the final intersection size via a standard subspace-evasive set without the $\F_h$-linearity requirement. For example, \cite{DL}~gives a construction of a (non-linear) $\bigl(s, (s/\epsilon)^s\bigr)$ subspace-evasive set over $\F^n$ of size $\abs{\F}^{(1-\epsilon) n}$. 

However, especially in applications for rank-metric codes, linearity is a property which is desirable and often necessary. 
\end{rmk}

\subsubsection{Combining with subspace designs}

The following theorem shows how to achieve our initial goal of ensuring small intersection dimension over the larger field $\F_{h^t}$. 

\begin{thm}[\cite{GK}] \label{thm:gk} For $\epsilon \in (0,1)$, positive integers $s,m$ with $s\leq \epsilon m/4$, and $q>m$, there is an explicit collection of $M=q^{\Omega(\epsilon m/s)}$ subspaces in $\F_q^m$, each of codimension at most $\epsilon m$, which form a $(s, 2s/\epsilon,1)$ $\F_q$-subspace design. 
\end{thm}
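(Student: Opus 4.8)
\medskip
\noindent\textbf{Proof plan.}
The plan is to identify $\F_q^m$ with the space $\F_q[X]_{<m}$ of polynomials of degree less than $m$, and to take as the members of the design the subspaces of polynomials divisible by a high power of a fixed low-degree irreducible. The engine of the analysis is the \emph{folded Wronskian}: fixing a primitive element $\gamma\in\F_q^*$ and $f_1,\dots,f_s\in\F_q[X]_{<m}$, put
\[
\mathrm{Wr}(f_1,\dots,f_s)(X)=\det\bigl(f_j(\gamma^{\,i-1}X)\bigr)_{1\le i,j\le s}.
\]
Two standard facts carry the argument: (a) because $\gamma$ has multiplicative order $q-1$, which exceeds $\max_j\deg f_j$ — this is exactly where the hypothesis $q>m$ is used — the polynomial $\mathrm{Wr}(f_1,\dots,f_s)$ vanishes identically if and only if $f_1,\dots,f_s$ are $\F_q$-linearly dependent; and (b) $\deg_X\mathrm{Wr}(f_1,\dots,f_s)\le s(m-1)$. (When $\mathrm{char}\,\F_q>m$ one may instead use the classical derivative Wronskian, which makes the local bookkeeping below more transparent.)

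For the construction, set $r=\lfloor \epsilon m/(2(s+1))\rfloor$ — which is $\ge1$, since $s\le\epsilon m/4$ forces $\epsilon m\ge4$ — and (treating $\epsilon m$ as an integer, the general case following by routine rounding) let $D=\epsilon m-r$ and $e=\lceil D/r\rceil$; note $e\ge 2s+1>s$. Take the family
\[
\bigl\{\,H_P:\ P\in\F_q[X]\text{ monic irreducible},\ \deg P=r\,\bigr\},\qquad H_P=\{\,f\in\F_q[X]_{<m}:\ P^{\,e}\mid f\,\}.
\]
Each $H_P$ has $\F_q$-codimension $re\le D+r=\epsilon m$; the number of members is the number of monic irreducibles of degree $r$ over $\F_q$, which is $\ge(q^r-2q^{r/2})/r=q^{\Omega(r)}=q^{\Omega(\epsilon m/s)}$; and the family is explicit, since degree-$r$ irreducibles over $\F_q$ can be enumerated (and membership in $H_P$ tested) in time $\mathrm{poly}(q^r)$.

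To verify the design property, fix an $\F_q$-subspace $W\subseteq\F_q[X]_{<m}$ with $\dim_{\F_q}W=s$ and set $j_P=\dim_{\F_q}(W\cap H_P)$. The key local claim is that, for any basis $g_1,\dots,g_s$ of $W$, the polynomial $\mathrm{Wr}(W):=\mathrm{Wr}(g_1,\dots,g_s)$ vanishes on the union of the $\gamma$-shifts of the zero set of $P$ to total degree at least $j_P\cdot r\cdot(e-s)$. One proves this by choosing the basis so that $g_1,\dots,g_{j_P}$ span $W\cap H_P$ (hence are divisible by $P^{\,e}$) and then bounding, monomial by monomial in the determinant defining $\mathrm{Wr}(W)$, the power of $P$ (appropriately $\gamma$-shifted) that survives: in every monomial, $j_P$ of the $s$ factors come from the distinguished $g_\ell$'s, and each such factor carries $P$ (up to a shift) to power at least $e-s+1$. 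Since the zero sets attached to distinct irreducibles $P$ are disjoint and $\mathrm{Wr}(W)\not\equiv0$ by fact~(a), summing the local claim over all $P$ in the family and invoking fact~(b) yields
\[
s(m-1)\ \ge\ \deg_X\mathrm{Wr}(W)\ \ge\ \sum_{P}j_P\, r\,(e-s)\ \ge\ \sum_{P}j_P\,(D-rs)\ \ge\ \tfrac{\epsilon m}{2}\sum_{P}j_P,
\]
using $re\ge D$ and $D-rs=\epsilon m-r(s+1)\ge\epsilon m/2$. Hence $\sum_P j_P\le 2(m-1)s/(\epsilon m)<2s/\epsilon$, which is precisely the $(s,2s/\epsilon,1)$ condition.

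The step I expect to be the main obstacle is the local claim in positive characteristic. When $\mathrm{char}\,\F_q\le m$ the classical Wronskian can vanish on linearly independent polynomials, forcing the use of the folded Wronskian throughout; but then the $\gamma$-shift in row $i$ moves the zeros of $P$ to the zeros of a \emph{different} degree-$r$ irreducible, so ``$P^{\,e}$ divides the $i$-th entry'' must be re-read as a statement about $s$ distinct shifted irreducibles. Carrying the degree count through this cleanly — equivalently, showing that the shifted zero sets contributed by different $P$ in the family do not collide (or bounding the collisions), so that they collectively account for at least $\sum_P j_P\,r\,(e-s)$ in $\deg_X\mathrm{Wr}(W)$ — is the genuinely technical part, and is essentially the content of the folded-Wronskian estimates of \cite{GK}. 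Everything else is routine: the efficient enumeration of the $H_P$, and checking that the constants survive the floors and ceilings under $s\le\epsilon m/4$, which is exactly the hypothesis guaranteeing $r\ge1$ and $e>s$.
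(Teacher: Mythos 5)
The paper does not actually prove this theorem; it is a black-box citation to Guruswami--Kopparty [GK], so there is no in-paper argument to compare against. Your reconstruction has the right skeleton (identify $\F_q^m$ with low-degree polynomials, impose vanishing conditions in a degree-$r$ extension to get $q^{\Omega(r)}$ subspaces, control $\sum_P j_P$ by a Wronskian degree bound), and your final bookkeeping from the local claim onward is arithmetically sound. But the local claim itself is false for the construction $H_P=\{f: P^e\mid f\}$ with the folded Wronskian, and the obstacle you flagged (collisions between shifts of different $P$'s) is not the real one.

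The real problem is that the folded Wronskian does not vanish at the relevant points at all when $j_P<s$. Take $s=2$, $j_P=1$, so $g_1$ is divisible by $P^e$ but $g_2$ is not. Then $\mathrm{Wr}=g_1(X)g_2(\gamma X)-g_1(\gamma X)g_2(X)$. At a root $\alpha$ of $P$, the first term vanishes but the second is $g_1(\gamma\alpha)\,g_2(\alpha)$, and $\gamma\alpha$ is generically not a root of $P$, so $\mathrm{Wr}(\alpha)\neq0$; the same happens at $\gamma^{-1}\alpha$. Structurally, the $a$-th entry of column $\ell\le j_P$ is $g_\ell(\gamma^{a-1}X)$, divisible by $P(\gamma^{a-1}X)^e$ --- $s$ pairwise-coprime irreducibles, no one of which divides the whole column --- so the determinant inherits no $P$-divisibility. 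The [GK] construction avoids this by making the vanishing set a \emph{geometric progression} $T_\beta=\{\beta,\gamma\beta,\dots,\gamma^{e-1}\beta\}$ with $\beta$ in a degree-$r$ extension (so $H_\beta=\{f: f|_{T_\beta}=0\}$ has codimension $re$ and there are $q^{\Omega(r)}$ choices of $\beta$). A geometric progression is closed, up to endpoints, under the $\gamma$-shift: for $X=\gamma^b\beta$ with $0\le b\le e-s$, every entry $g_\ell(\gamma^{a-1+b}\beta)$ in columns $\ell\le j_P$, rows $1\le a\le s$, is an evaluation at a point of $T_\beta$, hence zero, so $\det$ vanishes to order $\ge j_P$ at each of these $e-s+1$ points, contributing $r\,j_P(e-s+1)$ after passing to conjugates. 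This is exactly the factor your accounting needs; plugging it into $s(m-1)\ge \sum_P r\,j_P(e-s+1)$ with $re\le\epsilon m$ and $r\le\epsilon m/(2s)$ recovers $\sum_P j_P<2s/\epsilon$. Your $H_P$ would be fine with the \emph{ordinary} Hasse-derivative Wronskian (whose rows are derivatives, not $\gamma$-shifts, and hence respect point multiplicity), but that criterion can fail in positive characteristic and is not guaranteed merely by $q>m$; insisting on the folded Wronskian, which only needs $\mathrm{ord}(\gamma)>\deg$, is precisely what forces the geometric-progression structure on the subspaces.
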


Combined with Corollary~\ref{lem:se}, we now have a construction of a $(s, 2(m-1)s/\epsilon,t)$ $\F_h$-subspace design, summarized in the following statement.

\begin{thm} \label{thm-design} For integers $s\leq \epsilon m/4$ and $q>m$, there exists an explicit set of  $q^{\Omega(\epsilon m/s)}$ $\F_h$-subspaces in $\F_{h}^{tm}$ of co-dimension at most $2\epsilon tm$ forming a $(s, 2(m-1)s/\epsilon,t)$ $\F_h$-subspace design. 
\end{thm}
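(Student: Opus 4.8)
The plan is to compose the two pseudorandom objects in the natural two-stage fashion advertised before the statement: first use Theorem~\ref{thm:gk} to control intersection dimension over the big field $\F_q = \F_{h^t}$, and then use the $\F_h$-linear subspace-evasive set of Corollary~\ref{lem:se} to cut the dimension down over the small field $\F_h$. Concretely, let $\mathcal{H} = \{H_1,\dots,H_M\}$ be the explicit $(s, 2s/\epsilon, 1)$ $\F_q$-subspace design in $\F_q^m$ from Theorem~\ref{thm:gk}, with $M = q^{\Omega(\epsilon m/s)}$ and each $H_i$ of codimension at most $\epsilon m$ over $\F_q$. Separately, let $\mathcal{V} \subseteq \F_q^m$ be the $\F_h$-linear subspace-evasive set of Corollary~\ref{lem:se} (taking $n = m$ there, so there is a single block); it is an $\F_h$-subspace of $\F_q^m$ of $\F_h$-codimension $sn/m \cdot t = st$... wait, more carefully: $\mathcal{V}$ has size $q^{(m-s)} = h^{t(m-s)}$, so as an $\F_h$-subspace of $\F_h^{tm}$ it has $\F_h$-codimension $st$, i.e. $s t = s\cdot t \le \epsilon m t$ provided $s \le \epsilon m$ (which holds since $s \le \epsilon m/4$). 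The final design is the collection $S = \{\,H_i \cap \mathcal{V} : i = 1,\dots,M\,\}$, where each $H_i \cap \mathcal{V}$ is viewed as an $\F_h$-subspace of $\F_h^{tm}$.

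The bookkeeping then proceeds as follows. For the size, $S$ has $M = q^{\Omega(\epsilon m/s)}$ members, as required (distinctness of the intersected spaces is not needed for the definition, but in any case generically holds). For the codimension, each $H_i$ has $\F_h$-codimension at most $t\epsilon m$ in $\F_h^{tm}$ (an $\F_q$-codimension of at most $\epsilon m$ becomes an $\F_h$-codimension of at most $t\epsilon m$), and $\mathcal{V}$ has $\F_h$-codimension $st \le \epsilon tm$, so $H_i \cap \mathcal{V}$ has $\F_h$-codimension at most $2\epsilon tm$ by subadditivity of codimension. For the design property, fix an $\F_{h^t}$-subspace $W \subset \F_{h^t}^m$ of dimension $s$. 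By Theorem~\ref{thm:gk}, $\sum_{i=1}^M \dim_{\F_q}(H_i \cap W) \le 2s/\epsilon$. Now for each $i$, set $d_i := \dim_{\F_q}(H_i \cap W)$; then $H_i \cap W$ is an $\F_q$-subspace of dimension $d_i \le s$, so by the subspace-evasive property of $\mathcal{V}$ applied with parameter $k = d_i$ (legal since $d_i \le s$), we get $\abs{(H_i \cap W) \cap \mathcal{V}} \le h^{(m-1)d_i}$. Since $H_i \cap W \cap \mathcal{V} = H_i \cap \mathcal{V} \cap W$ is an $\F_h$-subspace, its $\F_h$-cardinality being at most $h^{(m-1)d_i}$ means $\dim_{\F_h}(H_i \cap \mathcal{V} \cap W) \le (m-1) d_i$. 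Summing over $i$,
\[
\sum_{i=1}^M \dim_{\F_h}\bigl((H_i \cap \mathcal{V}) \cap W\bigr) \;\le\; (m-1)\sum_{i=1}^M d_i \;\le\; (m-1)\cdot \frac{2s}{\epsilon} \;=\; \frac{2(m-1)s}{\epsilon},
\]
which is exactly the claimed bound, so $S$ is a $(s, 2(m-1)s/\epsilon, t)$ $\F_h$-subspace design.

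The one genuinely delicate point — and the step I would treat most carefully — is the interface between the two objects: Corollary~\ref{lem:se} bounds $\abs{S' \cap \mathcal{V}}$ only for $\F_q$-subspaces $S'$, and here we apply it to $S' = H_i \cap W$, which is indeed an $\F_q$-subspace (intersection of two $\F_q$-subspaces), so this is fine, but one must not accidentally try to apply it to $W$ itself or to $H_i \cap \mathcal{V}$, whose $\F_q$-dimension is large. The subtlety is that the "subspace-evasive" guarantee is used with a \emph{variable} parameter $k = d_i$ that depends on $i$, which is exactly why the clause "for all $1 \le k \le s$" in Corollary~\ref{lem:se} is needed rather than just the single value $k = s$; this is also what produces the linear-in-$d_i$ bound that lets the subspace-design sum be pulled through. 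A secondary technical check is the translation "$\abs{U} \le h^r$ for an $\F_h$-subspace $U$ implies $\dim_{\F_h} U \le r$," which is immediate but should be stated. Everything else — the codimension accounting over field extensions, the count of subspaces — is routine subadditivity and should be dispatched in a line each.
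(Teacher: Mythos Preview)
Your proposal is correct and follows essentially the same approach as the paper: intersect each member of the $\F_q$-subspace design from Theorem~\ref{thm:gk} with the $\F_h$-linear subspace-evasive set of Corollary~\ref{lem:se}, then bound codimension by subadditivity and bound the design sum by applying the evasiveness guarantee with the variable parameter $k=d_i=\dim_{\F_q}(H_i\cap W)$. Your commentary on why the ``for all $1\le k\le s$'' clause is needed and on which subspace the evasive bound is legitimately applied to is exactly the right emphasis.
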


\begin{proof} Let $V_1,\dotsc, V_M \subseteq \F_q^m$ be the elements of the $(s, 2s/\epsilon,1)$ $\F-q$-subspace design of Theorem~\ref{thm:gk}. For each $i$, define $H_i = V_i\cap S$, where $S\subseteq\F_q^m$ is the $(s, h^{(m-1)s})$ subspace-evasive set of Corollary~\ref{lem:se}. As $S$ and the $V_i$'s are $\F_h$-linear subspaces, $H_i$ is as well. We claim that the $H_i$'s form the desired $\F_h$-subspace design.

For each $i$, $V_i$ has co-dimension $\epsilon tm$, and $S$ has co-dimension $ts\leq \epsilon tm/4$, so the co-dimension of $H_i$ is at most $2\epsilon tm$. 

Now let $W$ be an $\F_q$-subspace of dimension $s$. By the $\F_q$-subspace design property of the $V_i$'s we have
\begin{equation}
\label{eq:gk-guarantee}
\sum_{i=1}^M \dim_{\F_q}(V_i \cap W) \le 2s/\epsilon \ .
\end{equation}
For each $i$, we also have that $\dim_{\F_q}(W\cap V_i) = s_i \leq s$, so by the subsace evasive property of $S$ from Corollary~\ref{lem:se}, $W \cap H_i = (W \cap V_i) \cap S$ has at most $h^{(m-1)s_i}$ elements. As $W \cap H_i$ is $\F_h$-linear, we have 
\begin{equation}
\label{eq:dl-guarantee}
\dim_{\F_h}(W \cap H_i) \le (m-1) \dim_{\F_q}(W \cap V_i) \ .
\end{equation}
Combining \eqref{eq:gk-guarantee} and \eqref{eq:dl-guarantee} we have 
\[\sum_i \dim_{\F_h}(W\cap H_i) \leq  \sum_i (m-1)\dim_{\F_q}(W\cap V_i)\leq (m-1)\cdot 2s/\epsilon \ .  \qedhere \]
\end{proof}

The motivation for constructing the above subspace design is that they yield a subspace that has small intersection with so-called periodic subspaces arising in certain linear-algebraic list decoding algorithms. We recall the definition from~\cite{GX}. Below, for a string $\mathbf{x}=(x_1,x_2,\dots,x_\ell)$, we denote by $\proj_{[a,b]}(\mathbf{x})$ the substring $(x_a,x_{a+1},\dots,x_b)$.

\begin{defn}[Periodic subspaces] 
\label{defn:periodic} For positive integers $s,m,k$ and $\kappa:=mk$, an affine subspace $H\subset \F_q^\kappa$ is said to be $(s,m,k)$-\textbf{periodic} if there exists a subspace $W\subseteq\F_q^m$ of dimension at most $s$ such that for every $j=1,2,\dotsc, k$, and every prefix $\mathbf{a}\in \F_q^{(j-1)m}$, the projected affine subspace of $\F_q^m$ defined by
\[\{\proj_{[(j-1)m+1,jm]}(\mathbf{x})\mid \mathbf{x}\in H~\text{and }\proj_{[1,(j-1)m]}(\mathbf{x})=\mathbf{a}\}\]
is contained in an affine subspace of $\F_q^m$ given by $W+\mathbf{v}_{\mathbf{a}}$ for some vector $\mathbf{v}_{\mathbf{a}}\in\F^m$ dependent on $\mathbf{a}$. 
\end{defn}

\begin{prop} \label{lem:periodic}
Let $H$ be a $(s, m, k)$-periodic affine suspace of $\F_q^{mk}$, and $H_1,H_2,\dots,H_k \subseteq \F_h^{mt}$ be distinct subspaces from a $(s, A,t)$ $\F_h$-subspace design. Then $H\cap (H_1\times\dotsb\times H_k)$ is an affine subspace over $\F_h$ of dimension at most $A$. 
\end{prop}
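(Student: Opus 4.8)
The plan is to induct on the number of blocks $k$, using the projection onto the first $k-1$ blocks together with rank--nullity over $\F_h$. First, two easy reductions. Since $H$ is an affine subspace over $\F_q$ (hence also over $\F_h$) and each $H_i$ is an $\F_h$-subspace, the set $T := H\cap(H_1\times\dotsb\times H_k)$ is automatically an $\F_h$-affine subspace, so only the dimension bound needs work; and if $T=\emptyset$ there is nothing to prove. Fixing $\mathbf{p}=(\mathbf{p}_1,\dotsc,\mathbf{p}_k)\in T$ and translating by $-\mathbf{p}$ (legitimate since $\mathbf{p}_j\in H_j$ and $H_j$ is linear, so $H_j-\mathbf{p}_j=H_j$), I may replace $T$ by the $\F_h$-\emph{linear} space $(H-\mathbf{p})\cap(H_1\times\dotsb\times H_k)$; a direct check against Definition~\ref{defn:periodic} shows $H-\mathbf{p}$ is still $(s,m,k)$-periodic with the same witnessing subspace $W$ (each coset $W+\mathbf{v}_{\mathbf{a}}$ simply gets shifted). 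So I may assume $H$ is $\F_q$-linear with $0\in H$ and $T$ is $\F_h$-linear; and, enlarging $W$ if needed, that $\dim_{\F_q}W=s$. Set $A_j:=\dim_{\F_h}(W\cap H_j)$; since $H_1,\dotsc,H_k$ are \emph{distinct} members of the subspace design, $\sum_{j=1}^k A_j\le A$, so it suffices to prove $\dim_{\F_h}(T)\le\sum_{j=1}^k A_j$.

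For the base case $k=1$, periodicity with $j=1$ and empty prefix says $H\subseteq W+\mathbf{v}_\emptyset$, and since $0\in H$ this coset is just $W$; hence $T=H\cap H_1\subseteq W\cap H_1$ and $\dim_{\F_h}(T)\le A_1$.

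For the inductive step $k\ge 2$, let $\pi\colon\F_q^{mk}\to\F_q^{m(k-1)}$ be the projection onto the first $k-1$ blocks. Checking the periodicity conditions for $j=1,\dotsc,k-1$, one sees $\pi(H)$ is $\F_q$-linear and $(s,m,k-1)$-periodic with the same witness $W$; also $\pi(T)\subseteq\pi(H)\cap(H_1\times\dotsb\times H_{k-1})$, so the inductive hypothesis gives $\dim_{\F_h}(\pi(T))\le\sum_{j=1}^{k-1}A_j$. For the kernel of $\pi|_T$: every element has the form $(0,\dotsc,0,\mathbf{x}_k)$ with $\mathbf{x}_k\in H_k$, and periodicity of $H$ with $j=k$ and all-zero prefix confines $\mathbf{x}_k$ to a coset $W+\mathbf{v}_0$, which equals $W$ because $0\in\ker(\pi|_T)$; thus the last-block projection embeds $\ker(\pi|_T)$ into $W\cap H_k$, so $\dim_{\F_h}\ker(\pi|_T)\le A_k$. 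Rank--nullity over $\F_h$ then yields $\dim_{\F_h}(T)=\dim_{\F_h}(\pi(T))+\dim_{\F_h}\ker(\pi|_T)\le\sum_{j=1}^{k}A_j\le A$, closing the induction.

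The calculations here are all light; the points I would be most careful about are the two ``structure descends'' claims --- that $(s,m,k)$-periodicity survives the translation $H\mapsto H-\mathbf{p}$ and the truncation $H\mapsto\pi(H)$ (both just require tracking how the affine offsets $\mathbf{v}_{\mathbf{a}}$ transform) --- and the bookkeeping across the two fields, in particular that what the induction actually needs is the sum $\sum_j\dim_{\F_h}(W\cap H_j)$ over the \emph{distinct} chosen subspaces, which is precisely the quantity an $(s,A,t)$ $\F_h$-subspace design bounds by $A$.
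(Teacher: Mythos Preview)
Your argument is correct and follows essentially the same induction as the paper's own proof: both bound the intersection block by block, showing that the contribution of the $j$-th block is at most $\dim_{\F_h}(W\cap H_j)$, and then invoke the subspace-design bound $\sum_j\dim_{\F_h}(W\cap H_j)\le A$. The only cosmetic differences are that the paper works directly in the affine setting and phrases the induction as a cardinality bound $\lvert\proj_{[1,im]}(H)\cap(H_1\times\dotsb\times H_i)\rvert\le h^{\sum_{j\le i}\dim_{\F_h}(W\cap H_j)}$ (counting extensions of each prefix), whereas you first translate to make $T$ linear and then use rank--nullity with the projection $\pi$; these are the same computation in different clothing.
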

\begin{proof} It is clear that $H\cap (H_1\times\dotsb\times H_k)$ is an affine subspace over $\F_h$. Let $W$ be the subspace associated to $H$ as in Definition~\ref{defn:periodic}. We will show by induction that $\abs{\proj_{[1, im]}(H)\cap (H_1\times\dotsb\times H_i)}\leq h^{\sum_{j=1}^i \dim_{\F_h} (W\cap H_j)}$. 

In the base case, since $H_1$ is a subspace, $\proj_{[1,m]}(H)\cap H_1=(W+v_{\mathbf{0}})\cap H_1$ is an affine subspace whose underlying subspace lies in $W\cap H_1$. In particular, its size is at most $h^{\dim(W\cap H_1)}$. 

Continuing, fix an element $\mathbf{a}\in \proj_{[1, im]}(H)\cap (H_1\times\dotsb\times H_i)$. Because $H$ is periodic and $H_{i+1}$ is linear, the possible extensions of $\mathbf{a}$ in $\proj_{[im+1, (i+1)m]}(H)\cap H_{i+1}$ are given by a coset of $W\cap H_{i+1}$. Thus, there are at most $h^{\dim(W\cap H_{i+1})}$ such extensions. Since by induction there were $h^{\sum_{j=1}^i \dim_{\F_h} (W\cap H_j)}$ possibilities for the prefix $\mathbf{a}$, the result follows. 
\smallskip

In particular, $H\cap (H_1\times\dotsb\times H_k)$ has dimension over $\F_h$ which is at most $\sum_{i=1}^k \dim(W\cap H_i)\leq A$, by the subspace design property. 
\end{proof}

\section{Explicit list-decodable rank-metric codes}
\label{sec:rm}

In this section, we show how to use the subspace designs of Theorem~\ref{thm-design} in order to get explicit list-decodable rank-metric codes of optimal rate for any desired error correction radius.
\medskip

We first review rank-metric codes, and in particular the Gabidulin code~\cite{Ga85}, which is the starting point of our construction. 

Let $h$ be a prime power, and let $\mathbb{M}_{n\times t}(\F_h)$ be the set of $n\times t$ matrices over $\F_h$. The {\em rank distance} between $A,B\in\mathbb{M}_{n\times t}(\F_h)$ is $d(A,B)=\rk(A-B)$. A rank-metric code $\mathcal{C}$ is a subset of $\mathbb{M}_{n\times t}(\F_h)$, with rate and distance given by 
\[R(\mathcal{C}) = \frac{\log_h\abs{\mathcal{C}}}{nt}\quad \text{ and }\quad d(\mathcal{C}) = \min_{A\neq B\in\mathcal{C}}\{d(A,B)\}.\]

The {\em Gabidulin code} encodes $h$-linearized polynomials of by their evaluations at linearly independent points. Recall that an $h$-linearized polynomial $f$ over $\F_{h^t}$ is a polynomial of the form $\sum_{i=0}^\ell a_i X^{h^i}$, with $a_i\in\F_{h^t}$. If $a_\ell\neq 0$, then $\ell$ is called the $h$-{\em degree} of $f$. We write $\mathcal{L}_h(t)$ for the set of $h$-linearized polynomials over $\F_{h^t}$. 

Let $0<k\leq n\leq t$ be integers, and choose $\alpha_1,\dotsc, \alpha_n\in\F_{h^t}$ to be linearly independent over $\F_h$. For every $h$-linearized polynomial $f\in \F_{h^t}[X]$ of $h$-degree at most $k-1$, we can encode $f$ by the column vector $M_f=\bigl( f(\alpha_1),\dotsc, f(\alpha_n)\bigr)^T$ over $\F_{h^t}$. By fixing a basis of $\F_{h^t}$ over $\F_h$, we can also think of $M_f$ as an $n\times t$ matrix over $\F_h$. This yields the Gabidulin code
\[\mathcal{C}_G(h;n,t,k) := \{M_f\in \mathbb{M}_{n\times t}(\F_h)\mid f\in\mathcal{L}_h(t),\,h\text{-degree}(f)\leq k-1\}.\]
If a rank-metric codeword $X$ is transmitted, and a matrix $Y$ is received, we say that $\rk(Y-X)$ {\em rank errors} have occurred. 

Suppose that $t=nm$ for some integer $m$, so that $\F_{h^t}$ has a subfield $\F_{h^n}=: \F_q$. In the case when the evaluation points $\alpha_1,\dotsc, \alpha_n$ of the Gabidulin code span $\F_{h^n}$, Guruswami and Xing~\cite{GX} show the following:

\begin{thm}[\cite{GX}] Let $f\in \F_{h^t}[X]$ be an $h$-linearized polynomial with $h$-degree at most $k-1$. Suppose that a codeword $M_f=\bigl( f(\alpha_1),\dotsc, f(\alpha_n)\bigr)^T$ is transmitted and $Y=(y_1,\dotsc, y_n)^T$ is received with at most $e$ rank errors. If $e\leq \frac{s(n-k)}{s+1}$, then there is an algorithm running in time $\poly(n, m, \log q)$ outputting a $(s-1, m,k)$-periodic subspace containing all candidate messages $f$. 
\end{thm}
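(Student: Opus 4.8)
The plan is to adapt the linear-algebraic list-decoder for Gabidulin codes from \cite{GX} and to show that the solution space it produces is exactly a periodic subspace of the promised type. First I would set up the interpolation step: given the received word $Y=(y_1,\dots,y_n)^T$, I would search for a nonzero ``linearized interpolation polynomial'' of the shape $Q(X,Y_1,\dots,Y_s) = A_0(X) + A_1(X)\cdot Y_1 + \dots + A_s(X)\cdot Y_s$, where the $A_i$ are themselves $h$-linearized polynomials over $\F_{h^t}$ with appropriate $h$-degree bounds (roughly $h$-degree of $A_0$ at most $n-1-\frac{(s+1)(n-k)}{s+1}+\dots$, and of $A_1,\dots,A_s$ at most something like $n-1-\frac{s(n-k)}{s+1}$ minus shifts), subject to the $n$ linear constraints $Q(\alpha_j, y_j, y_j^{[1]}, \dots) = 0$. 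Here I would use the key algebraic fact that composition of $h$-linearized polynomials corresponds to a kind of product, and that the relevant ``twisted'' evaluations $y_j^{[i]}$ (Frobenius powers) make the constraints $\F_{h^t}$-linear in the coefficients of the $A_i$'s. A dimension count shows such a $Q$ exists when the total number of free coefficients exceeds $n$, which is where the bound $e \le \frac{s(n-k)}{s+1}$ enters.

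Next I would establish the ``functional equation'': if $f$ has $h$-degree $\le k-1$ and agrees with $Y$ on a subspace of co-dimension at most $e$ (i.e.\ $\rk(Y - M_f) \le e$), then $A_0(X) + A_1(X)\cdot f(X) + \dots + A_s(X)\cdot f^{[s-1]}(X) = 0$ identically as a linearized polynomial, because the left side is a linearized polynomial whose number of roots (a subspace of $\F_{h^t}$) exceeds its $h$-degree. Here $f^{[i]}$ denotes the $i$-fold ``Frobenius-composition'' of $f$ with itself, or more precisely the appropriate linearized analogue; the interpolation degree bounds are chosen precisely so this root-counting argument goes through. This reduces recovering $f$ to solving a single $\F_{h^t}$-linear equation in the (linearized) polynomial $f$, once the $A_i$ are known.

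The crucial final step is to argue that the solution set of this equation, when we write the coefficient vector of $f$ in blocks of size $m$ over $\F_q = \F_{h^n}$ (using $t = nm$ and the hypothesis that $\alpha_1,\dots,\alpha_n$ span $\F_{h^n}$), is $(s-1,m,k)$-periodic in the sense of Definition~\ref{defn:periodic}. The point is that the equation $A_0 + \sum_{i=1}^s A_i \cdot f^{[i-1]} = 0$, read coefficient-by-coefficient in the $h$-degree of $f$, becomes a system that expresses each new ``block'' of $m$ coefficients of $f$ as an affine function of the earlier blocks plus a contribution lying in a fixed $\F_q$-subspace $W$ of dimension $\le s-1$ — this $W$ comes from the kernel/structure of the leading coefficient $A_s$ (or rather its top Frobenius coefficients) and does not depend on which block we are looking at, which is exactly the periodicity. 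I expect this last identification to be the main obstacle: one has to carefully track how the $\F_q$-structure interacts with the $h$-linearized (Frobenius-twisted) arithmetic, verify that the ``period'' subspace $W$ is genuinely independent of the block index $j$ and of the prefix $\mathbf{a}$, and confirm the dimension bound $\dim W \le s-1$. Everything else — the interpolation existence count and the root-counting functional equation — is a routine adaptation of the Reed--Solomon / folded Reed--Solomon linear-algebraic decoding template to the linearized setting, and the running time bound $\poly(n,m,\log q)$ follows since all steps are linear algebra over $\F_{h^t}$.
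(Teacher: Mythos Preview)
The paper does not actually prove this theorem: it is quoted verbatim as a result of Guruswami and Xing \cite{GX}, with no argument supplied beyond the citation. So there is no ``paper's own proof'' to compare your proposal against; the theorem is used as a black box in deriving Theorem~\ref{thm:rank-metric}.

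That said, your sketch is a reasonable high-level reconstruction of the \cite{GX} argument, but a couple of details are off and would cause trouble if you tried to fill them in as written. First, your description of $f^{[i]}$ as ``the $i$-fold Frobenius-composition of $f$ with itself'' is not what is needed: the correct object is the polynomial obtained by applying the Frobenius $x\mapsto x^{q}=x^{h^n}$ to the \emph{coefficients} of $f$ (so that $f^{[i]}(\alpha_j)=f(\alpha_j)^{q^i}$ precisely because $\alpha_j\in\F_{h^n}$ is fixed by this Frobenius). This is where the hypothesis that the $\alpha_j$ span the subfield $\F_{h^n}$ enters, and it is what makes the interpolation conditions $\F_{h^t}$-linear and ultimately produces the $\F_q$-periodic structure; composing $f$ with itself would give the wrong degree growth and no periodicity. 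Second, the interpolation conditions should read $Q(\alpha_j, y_j, y_j^{q}, \dots, y_j^{q^{s-1}})=0$ with $q=h^n$, not generic ``twisted evaluations $y_j^{[i]}$''; getting the exponent right (namely $h^n$ rather than $h$) is exactly what ties the functional equation to the block decomposition of the coefficient vector of $f$ into $k$ blocks of size $m$ over $\F_q$. With those two corrections the rest of your outline --- dimension count for interpolation, root-counting over the $\F_h$-span of the $\alpha_j$ to get the functional equation, and reading off the $(s-1,m,k)$-periodic structure block by block --- matches the argument in \cite{GX}.
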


By Proposition~\ref{lem:periodic}, by restricting the message polynomials $f=\sum_i f_i X^{q^i}$ to have coefficients $f_i \in H_{i+1}$ for $0 \le i < k$, where $H_1,H_2,\dots,H_k$ are distinct elements of the subspace design in Theorem~\ref{thm-design}, the final list of candidate messages will have dimension at most $2(m-1)s/\epsilon$ over $\F_h$, or size at most $h^{2(m-1)s/\epsilon}$. As one can take $m = O(s/\epsilon)$ for the necessary subspace design guaranteed by Theorem~\ref{thm-design}, we can conclude the following theorem, which is our main result.
\begin{thm} \label{thm:rank-metric} For every $\epsilon>0$ and integer $s>0$, there exists an explicit $\F_h$-linear subcode of the Gabidulin code $\mathcal{C}_G(h;n,t,k)$ with evaluation points spanning $\F_{h^n}$ of rate $(1-2\epsilon)k/n$ 
which is list-decodable from $\frac{s}{s+1}\cdot (n-k)$ rank errors. The final list is contained in an $\F_h$-subspace of dimension at most $O(s^2/\epsilon^2)$. 
\end{thm}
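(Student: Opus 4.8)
The plan is to carry out the strategy sketched just before the statement. Start with the Gabidulin code $\mathcal{C}_G(h;n,t,k)$ where $t=nm$, so that $\F_q:=\F_{h^n}$ is a subfield of $\F_{h^t}=\F_{q^m}$ and the $n$ evaluation points form an $\F_h$-basis of $\F_{h^n}$, and then excise an $\F_h$-linear subcode by constraining the coefficients of the message polynomial. Concretely, fix $m=\Theta(s/\epsilon)$ large enough that Theorem~\ref{thm-design} applies (say $m\geq 4s/\epsilon$) and take $n$ large enough that $q=h^n>m$; applying Theorem~\ref{thm-design} with its extension-degree parameter set to $n$ and its field set to $\F_q$ produces an explicit collection of $q^{\Omega(\epsilon m/s)}$ $\F_h$-subspaces of $\F_h^{nm}$, each of codimension at most $2\epsilon nm$, forming a $(s,2(m-1)s/\epsilon,n)$ $\F_h$-subspace design. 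Since $q^{\Omega(\epsilon m/s)}$ dwarfs $k\leq n$, select $k$ distinct members $H_1,\dots,H_k$, and using a fixed $\F_h$-basis identification $\F_{h^t}=\F_{q^m}\cong\F_h^{nm}$ let $\mathcal{C}'$ be the set of $M_f\in\mathcal{C}_G(h;n,t,k)$ whose message polynomial $f=\sum_{i=0}^{k-1}f_iX^{h^i}$ satisfies $f_{i-1}\in H_i$ for every $i=1,\dots,k$. Each $H_i$ is $\F_h$-linear and $f\mapsto M_f$ is an $\F_h$-linear injection (a linearized polynomial of $h$-degree $<k\leq n$ is determined by its values at $n$ independent points), so $\mathcal{C}'$ is an $\F_h$-linear subcode whose $\F_h$-dimension is $\sum_{i=1}^k\dim_{\F_h}H_i\geq k(1-2\epsilon)nm$; dividing by the ambient dimension $nt=n^2m$ gives rate at least $(1-2\epsilon)k/n$.

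For decoding, suppose a word within rank distance $\frac{s}{s+1}(n-k)$ of some $M_f\in\mathcal{C}'$ is received. Run the decoder of \cite{GX} quoted above, invoked with parameter $s$: in $\poly(n,m,\log q)$ time it returns an $(s-1,m,k)$-periodic affine subspace $H\subseteq\F_q^{mk}$ containing the coefficient vector of every message consistent with the received word, in particular that of $f$. Thus the true coefficient vector lies in $H\cap(H_1\times\dots\times H_k)$, and since a $(s,A,n)$ design is a fortiori a $(s-1,A,n)$ design, Proposition~\ref{lem:periodic} shows this set is an affine $\F_h$-subspace of dimension at most $2(m-1)s/\epsilon=O(s^2/\epsilon^2)$. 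Enumerating its (constantly many) points, or simply solving the linear system describing it, and keeping those that yield genuine near-codewords completes the list decoding; passing to the linear span costs one more dimension, so the list lies in an $\F_h$-subspace of dimension $O(s^2/\epsilon^2)$.

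The one genuinely load-bearing step — and the reason the plain subspace designs of \cite{GK} over $\F_q$ do not suffice — is the intersection bound. The periodic subspace $H$ lives over $\F_q=\F_{h^n}$, whose size is exponential in the block length, so controlling only its $\F_q$-dimension would be useless; it is exactly the $(s,A,t)$ design property, which bounds the $\F_h$-dimension (not merely the $\F_q$-dimension) of the intersection with every $s$-dimensional $\F_{h^t}$-subspace, that lets Proposition~\ref{lem:periodic} deliver a constant-dimensional list. The remaining work is routine: checking that restricting the coefficients to the $H_i$ leaves the evaluation points and linearized-polynomial structure untouched, so that $\mathcal{C}'\subseteq\mathcal{C}_G(h;n,t,k)$ genuinely holds, and verifying that $m=\Theta(s/\epsilon)$ can simultaneously meet $s\leq\epsilon m/4$, $h^n>m$, and $q^{\Omega(\epsilon m/s)}\geq k$.
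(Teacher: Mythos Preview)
Your proposal is correct and follows essentially the same approach as the paper: restrict the $k$ coefficients of the message polynomial to lie in $k$ distinct members of the $(s,2(m-1)s/\epsilon,n)$ $\F_h$-subspace design from Theorem~\ref{thm-design}, invoke the \cite{GX} decoder to land in an $(s-1,m,k)$-periodic subspace, and apply Proposition~\ref{lem:periodic} with $m=\Theta(s/\epsilon)$ to bound the list. You have in fact spelled out several bookkeeping details (the parameter relabeling between Sections~\ref{sec:se} and~\ref{sec:rm}, the rate computation, the $(s-1)\Rightarrow s$ monotonicity of the design, the affine-to-linear span step) that the paper leaves implicit.
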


\section{Application to low-order folding of Reed-Solomon codes}
\label{sec:frs}

In this section, we show how the idea of only evading subspaces over an extension field can be used to give an algorithm for list-decoding (subcodes of) folded Reed-Solomon codes in the case when the folding parameter has low ($O(1)$) order. 

As in the case of KK codes, our decoding algorithm follows the framework of interpolating a linear polynomial and then solving a linear system for candidate polynomials. 
\medskip
Fix $\gamma$ generating $\F_q^*$. Let $N=\frac{q-1}{\ell}$, and let $\zeta=\gamma^{N}$, which has order $\ell$ in $\F_q$. 
Then the {\bf low-order folded Reed-Solomon code} encodes a polynomial $f$ of degree $<k$ by 
\[f\mapsto\begin{bmatrix}
f(1) & f(\gamma) & \dotsb & f(\gamma^{N-1})\\
f(\zeta) & f(\zeta \gamma) & \hdots & f(\zeta \gamma^{N-1})\\
\vdots & \vdots & \ddots & \vdots \\
f(\zeta^{\ell-1})& f(\zeta^{\ell-1} \gamma) & \hdots & f(\zeta^{\ell-1} \gamma^{N-1})
\end{bmatrix}.\]

Similarly to folded Reed-Solomon codes, this is a code of rate $\frac{k}{\ell N}$ and distance $N - (k-1)/\ell$. 

\subsection{Interpolation}

Given a received word 
\[\begin{pmatrix} 
y_{00} & y_{01} & \hdots & y_{0(N-1)}\\ 
y_{10} & y_{11} & \hdots & y_{1 (N-1)}\\
\vdots & \vdots & \ddots & \vdots\\
y_{(\ell-1) 0} & y_{(\ell-1) 1} & \hdots & y_{(\ell-1) (N-1)}\\
\end{pmatrix},\]
we would like to interpolate a (nonzero) polynomial
\[Q(X, Y_1,\dotsc, Y_s)= A_0(X) + A_1(X) Y_1 + \dotsb + A_s(X) Y_s\]
such that 
\begin{equation}\label{interp}
Q\bigl(\gamma^{iN + j}, y_{ij}, y_{(i+1)j},\dotsc, y_{(i+s-1)j}\bigr)=0\qquad i\in \{0,\dotsc, \ell-1\},\,j\in \{0,\dotsc, N-1\},
\end{equation}
where all indices are taken modulo $\ell$. 
\medskip

We will require $\deg(A_0) \leq  D+k-1$, and $\deg(A_i)\leq D$ for $i>0$. 

\begin{lemma} Let 
\[D=\left\lfloor\frac{\ell N - k+1}{s+1}\right\rfloor.\] 
Then a nonzero polynomial $Q$ satisfying~\eqref{interp} exists (and can be found by solving a linear system). 
\end{lemma}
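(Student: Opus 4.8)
The plan is the standard dimension-counting argument for interpolation: the existence of a nonzero $Q$ is guaranteed as soon as the number of free coefficients in $Q$ strictly exceeds the number of linear constraints imposed by~\eqref{interp}, since a homogeneous linear system with more unknowns than equations always has a nonzero solution (and Gaussian elimination finds one in polynomial time).

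First I would count the unknowns. The polynomial $Q(X,Y_1,\dots,Y_s) = A_0(X) + A_1(X)Y_1 + \dots + A_s(X)Y_s$ has $A_0$ of degree at most $D+k-1$, contributing $D+k$ coefficients, and each of $A_1,\dots,A_s$ of degree at most $D$, contributing $D+1$ coefficients each. So the total number of coefficients is $(D+k) + s(D+1) = (s+1)D + k + s$. Next I would count the constraints: equation~\eqref{interp} imposes one linear condition for each pair $(i,j)$ with $i \in \{0,\dots,\ell-1\}$ and $j \in \{0,\dots,N-1\}$, for a total of $\ell N$ homogeneous linear equations in the coefficients of $Q$.

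The heart of the argument is then the inequality $(s+1)D + k + s > \ell N$, i.e. $(s+1)D \geq \ell N - k - s + 1$, which must follow from the choice $D = \lfloor (\ell N - k + 1)/(s+1) \rfloor$. Indeed, $D = \lfloor (\ell N - k + 1)/(s+1) \rfloor > (\ell N - k + 1)/(s+1) - 1$, so $(s+1)D > \ell N - k + 1 - (s+1) = \ell N - k - s$, hence $(s+1)D \geq \ell N - k - s + 1$ since both sides are integers, and therefore $(s+1)D + k + s \geq \ell N + 1 > \ell N$. This shows the number of unknowns strictly exceeds the number of equations, so a nonzero solution $Q$ exists and can be found by solving the linear system.

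I do not expect any real obstacle here — this is a routine counting argument, and the only thing to be careful about is getting the off-by-one bookkeeping right in both the count of coefficients of $A_0$ (degree $D+k-1$ means $D+k$ coefficients) and in the floor manipulation. One should also note implicitly that $D \geq 0$, which holds as long as $\ell N \geq k - 1$ (otherwise the code is degenerate), so that the degree bounds are meaningful; this is automatic in the regime of interest since the rate $k/(\ell N)$ is at most $1$.
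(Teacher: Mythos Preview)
Your proof is correct and follows exactly the paper's approach: count degrees of freedom $(D+1)(s+1)+k-1=(s+1)D+k+s$ versus $\ell N$ homogeneous constraints, and conclude a nonzero solution exists. You even spell out the floor inequality that the paper leaves implicit, so your argument is slightly more detailed but otherwise identical.
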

\begin{proof} The number of interpolation conditions is $\ell N$. The quantity $(D+1)(s+1)  + k-1>\ell N$ is the number of degrees of freedom for the interpolation, and the conditions are homogeneous, so a nonzero solution exists. 
\end{proof}

\begin{lemma} \label{lem:interp-sat}
If the number of agreements $t$ is greater than $\frac{D + k-1}{\ell}$,
then 
\begin{equation}
\label{eq:interp-sat}
Q\bigl(X, f(X), f(\zeta X),\dotsc, f(\zeta^{s-1}X)\bigr)=0. 
\end{equation}
\end{lemma}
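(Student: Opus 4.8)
The plan is to show that the polynomial $R(X) := Q\bigl(X, f(X), f(\zeta X),\dotsc, f(\zeta^{s-1}X)\bigr)$ is identically zero by a standard degree-versus-zeros argument. First I would bound the degree of $R(X)$: since $\deg(A_0) \le D+k-1$ and $\deg(A_i) \le D$ for $i \ge 1$, and each $f(\zeta^r X)$ has degree $\le k-1$, every term $A_i(X) f(\zeta^{i-1} X)$ has degree at most $D + k - 1$, so $\deg R \le D + k - 1$.

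Next I would exhibit many roots of $R$. Consider a column index $j$ where agreement occurs, meaning $y_{ij} = f(\zeta^i \gamma^j)$ for all $i \in \{0,\dots,\ell-1\}$ (here one has to be slightly careful about what ``agreement'' means for the folded code — an agreement should correspond to a whole column of the received matrix matching the encoding, i.e.\ $t$ counts agreeing columns). For such a $j$ and each $i \in \{0,\dots,\ell-1\}$, the interpolation condition \eqref{interp} reads
\[
0 = Q\bigl(\gamma^{iN+j}, y_{ij}, y_{(i+1)j}, \dots, y_{(i+s-1)j}\bigr) = Q\bigl(\gamma^{iN+j}, f(\zeta^i \gamma^j), f(\zeta^{i+1}\gamma^j), \dots, f(\zeta^{i+s-1}\gamma^j)\bigr),
\]
using $y_{i'j} = f(\zeta^{i'}\gamma^j)$ with indices mod $\ell$ and the fact that $\zeta^\ell = 1$ so $\zeta^{i'}\gamma^j$ is well-defined. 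Now observe that $\gamma^{iN+j} = \zeta^i \gamma^j$ since $\zeta = \gamma^N$, so setting $x = \zeta^i \gamma^j$ this says exactly $Q(x, f(x), f(\zeta x), \dots, f(\zeta^{s-1} x)) = R(x) = 0$. As $i$ ranges over $\{0,\dots,\ell-1\}$, the elements $\zeta^i \gamma^j$ are $\ell$ distinct field elements (distinct cosets representatives scaled by the order-$\ell$ element $\zeta$). Thus each agreeing column contributes $\ell$ distinct roots of $R$, and distinct columns $j \ne j'$ give disjoint root sets since $\gamma^j$ and $\gamma^{j'}$ lie in different cosets of $\langle \zeta \rangle$.

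Combining: if there are $t$ agreeing columns, $R$ has at least $\ell t$ distinct roots. If $\ell t > D + k - 1$, i.e.\ $t > (D+k-1)/\ell$, this exceeds $\deg R$, forcing $R \equiv 0$, which is precisely \eqref{eq:interp-sat}. I expect the only real subtlety to be bookkeeping with the cyclic indices and the identification $\gamma^{iN+j} = \zeta^i\gamma^j$, together with pinning down the precise meaning of ``number of agreements'' in the folded setting; the degree bound and the counting of disjoint roots are routine.
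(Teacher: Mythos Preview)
Your proposal is correct and follows exactly the same approach as the paper: bound $\deg R \le D+k-1$, observe that each agreeing column $j$ yields $\ell$ distinct roots $\gamma^{iN+j}=\zeta^i\gamma^j$ of $R$, and conclude $R\equiv 0$ once $t\ell > D+k-1$. Your write-up is in fact more careful than the paper's about the identification $\gamma^{iN+j}=\zeta^i\gamma^j$ and about why roots from different columns are distinct.
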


\begin{proof} $Q\bigl(X, f(x),\dotsc, f(\zeta^{s-1}X)\bigr)$ is a univariate polynomial of degree $D+k-1$, and each correct column $j$ yields $\ell$ distinct roots $\gamma^{iN + j}$ for $i\in \{0,\dotsc, \ell-1\}$. Thus if $t\ell>\deg D+k-1\geq \deg Q$, $Q$ is the zero polynomial. 
\end{proof}

\noindent
For our choice of $D$, the requirement on $t$ in Lemma~\ref{lem:interp-sat} is met if $t$ satisfies
\begin{equation}\label{eq:agree}
\frac{t}{N}\geq \frac{1}{s+1} + \frac{s}{s+1}R.
\end{equation}

\begin{rmk} In ordinary folded Reed-Solomon codes, where the folding parameter is primitive of order $q-1$, the agreement fraction required to satisfy~\eqref{eq:interp-sat} is 
\[\frac{t}{N}\geq \frac{1}{s+1} + \frac{s}{s+1} \frac{\ell R}{\ell - s+1},\]
which is higher than~\eqref{eq:agree}. In our case, because $\zeta$ has low order, we are able to use interpolation conditions that ``wrap around,'' allowing us to impose $\ell$ conditions per coordinate rather than $\ell -s+1$. Therefore we can satisfy Equation~\eqref{eq:interp-sat} with lower agreement. On the other hand, it is known how to list-decode folded Reed-Solomon codes themselves, whereas we are only able to list-decode a subcode. 
\end{rmk}

\subsection{Decoding}

In this section, we describe how to solve the system 
\[Q\bigl(X, f(X), f(\zeta X),\dotsc, f(\zeta^{s-1}X)\bigr)=0\tag{\ref{eq:interp-sat}}\]
for candidate polynomials $f$. 

\begin{prop} \label{lem:decoding}
Given an irreducible polynomial $R(X)\in\F_q[X]$ such that 
\begin{itemize}
\itemsep=0ex
\item $\deg R\geq k$, and
\item for some $a$, $\zeta X \equiv X^{q^a}\pmod{R}$. 
\end{itemize}
Then the set of $f$ of degree $<k$ satisfying \eqref{eq:interp-sat} is an $\F_{q^a}$-affine subspace of dimension at most $s-1$. 
\end{prop}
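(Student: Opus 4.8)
The plan is to mimic the linear-algebraic decoding argument for folded Reed-Solomon codes (as in \cite{GW}), but with the twist that the ``Frobenius'' action is taken modulo the irreducible polynomial $R(X)$ rather than modulo $X^q - X$ over the whole field. Write $Q(X, Y_1, \dots, Y_s) = A_0(X) + A_1(X) Y_1 + \dots + A_s(X) Y_s$. Working in the ring $\F_q[X]/(R(X))$, which is the field $\F_{q^{\deg R}}$ since $R$ is irreducible, the condition $\zeta X \equiv X^{q^a} \pmod{R}$ says that substituting $f(\zeta X)$ corresponds, modulo $R$, to applying the $q^a$-power Frobenius to $f(X) \bmod R$. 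More generally $f(\zeta^i X) \equiv (f(X) \bmod R)^{q^{ai}} \pmod{R}$ for every $i$, because $\zeta^i X \equiv (\zeta X)$ composed with itself, i.e. $\zeta^i X \equiv X^{q^{ai}} \pmod R$ — this needs the small check that the map $g(X)\mapsto g(X)^{q^a} \bmod R$ iterates correctly, which follows since raising to the $q^a$ power is a ring homomorphism of $\F_q[X]/(R)$ fixing $\F_q$.

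Next I would reduce \eqref{eq:interp-sat} modulo $R$. Since \eqref{eq:interp-sat} is the identity $Q(X, f(X), f(\zeta X), \dots, f(\zeta^{s-1} X)) = 0$ as a polynomial, it certainly holds modulo $R$, giving
\[
A_0(X) + \sum_{i=1}^{s} A_i(X)\, r^{q^{a(i-1)}} \equiv 0 \pmod{R},
\]
where $r := f(X) \bmod R \in \F_q[X]/(R)$. This is a single equation over the field $\F_{q^{\deg R}}$ in the unknown $r$. The key point, exactly as in the folded RS analysis, is that $Q$ is nonzero and $\deg A_i \le D < \deg R$ (we should record that $D < k \le \deg R$ from the degree bound on $R$ and the definition of $D$), so the $A_i(X) \bmod R$ are not all zero; hence the map
\[
r \longmapsto A_0 + \sum_{i=1}^{s} A_i\, r^{q^{a(i-1)}}
\]
is a nonzero $\F_{q^a}$-affine map on $\F_{q^{\deg R}}$ — it is $\F_{q^a}$-linear in $r$ up to the constant term $A_0$, because each $r \mapsto r^{q^{a(i-1)}}$ is $\F_{q^a}$-linear. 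A nonzero $\F_{q^a}$-affine map given by a ``linearized'' polynomial of $q^a$-degree at most $s-1$ has solution set that is either empty or an $\F_{q^a}$-affine subspace of $\F_{q^{\deg R}}$ of $\F_{q^a}$-dimension at most $s-1$ (the associated homogeneous linearized polynomial has at most $(q^a)^{s-1}$ roots). Finally, since $\deg R \ge k$ and each candidate $f$ has degree $< k$, the reduction map $f \mapsto f \bmod R$ is injective on the set of candidates, so the set of valid $f$ injects into this affine subspace, and pulling back an affine-linear relation over $\F_q[X]$ shows the preimage is itself $\F_{q^a}$-affine of dimension $\le s-1$.

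The main obstacle, and the step deserving the most care, is justifying that the reduced map is genuinely nonzero and that its solution set is $\F_{q^a}$-affine of the claimed dimension: one must argue that not all $A_i \bmod R$ vanish (using $\deg A_i < \deg R$ and $Q \neq 0$), identify $r \mapsto r^{q^{ai}}$ as $\F_{q^a}$-linear on the field $\F_q[X]/(R) \cong \F_{q^{\deg R}}$, and invoke the bound on the number of roots of a linearized polynomial. The iteration identity $\zeta^i X \equiv X^{q^{ai}} \pmod R$ also needs a clean one-line induction. Everything else — the degree bookkeeping $D < k \le \deg R$, injectivity of reduction mod $R$ on low-degree polynomials, and transporting the affine structure back to the space of polynomials $f$ — is routine.
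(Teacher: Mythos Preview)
Your overall approach is the same as the paper's: reduce \eqref{eq:interp-sat} modulo $R$, use $\zeta^i X \equiv X^{q^{ai}} \pmod R$ to turn the equation into an $\F_{q^a}$-affine (linearized) equation in $r = f(X)\bmod R$, bound the number of roots by $(q^a)^{s-1}$, and use $\deg R \ge k$ for injectivity of reduction. The induction for $\zeta^i X \equiv X^{q^{ai}}$, the $\F_{q^a}$-linearity of $r\mapsto r^{q^{ai}}$, and the injectivity step are all fine.

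There is, however, a real gap in the step you flag as the crux. You assert that $D < k \le \deg R$, and hence that the $A_i \bmod R$ cannot all vanish. But $D = \lfloor (\ell N - k + 1)/(s+1)\rfloor$ need not be smaller than $k$: it is only when $k(s+2) > \ell N + 1$, i.e., when the rate exceeds roughly $1/(s+2)$. For low-rate codes (which are certainly allowed in Corollary~\ref{thm:lo-frs}), $D \ge k$ is possible, and then each $A_i$ for $i\ge 1$ has degree up to $D \ge \deg R$ and could in principle be a nonzero multiple of $R$; similarly $\deg A_0 \le D+k-1$ can exceed $\deg R$ even when $D<k$. So the inequality you want simply does not follow ``from the definition of $D$.''

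The paper's fix is different and does not rely on any degree comparison: before reducing mod $R$, divide $Q$ through by the largest power $R^j$ that divides \emph{every} $A_i$. The identity $Q(X,f(X),\dots)=0$ in $\F_q[X]$ survives this division, and afterwards at least one $A_i$ is not divisible by $R$, so the reduced equation is genuinely nonzero in $\F_q[X]/(R)$. With that one-line preprocessing step in place, the rest of your argument goes through exactly as written.
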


\begin{proof} The condition \eqref{eq:interp-sat} says 
\[0=A_0(X) + A_1(X) f(X) + A_2(X) f(\zeta X) + \dotsb + A_s(X) f(\zeta^{s-1}X).\]

Then we have 
\[A_0(X) + A_1(X) f(X) + A_2(X) f(X)^{q^a} + \dotsb + A_s(X) f(X)^{q^{(s-1)a}}\equiv 0\pmod{R}.\]

By dividing out the highest power of $R$ which divides every $A_i$, Equation~\eqref{eq:interp-sat} is still satisfied and we may assume that this equation is nonzero mod $R$. 

In particular, this equation has at most $q^{(s-1)a}$ solutions for $f$ mod $R$. When $\deg f<k\leq \deg R$, $f$ is uniquely determined by its residue mod $R$ and there are at most $q^{(s-1)a}$ solutions for $f$. 

The fact that the solution space is $\F_{q^a}$-affine follows from the fact that the terms in which $f(X)$ appears all have degree $q^{ai}$ for some $i$. 
\end{proof}

Because the output space is a subspace (over the large field
$\F_{q^a}$), by picking the message polynomials $f$ to come from a
subspace-evasive set, we can reduce the list size bound. More
specifically, if $\ell$ is at least $s/\epsilon$, \cite{DL} gives a
construction of a $(s, (s/\epsilon)^{s})$ subspace-evasive set $S$
over $(\F_{q^a})^{k/a}$ of size $q^{(1-\epsilon) k}$. By precoding the
messages to come from this set $S$, we are able to both encode and
compute the intersection of the code with the output subspace of
Proposition~\ref{lem:decoding} in polynomial time.

Setting $s=O(1/\epsilon)$ and $\ell=O(s/\epsilon)$, we obtain the following. 
\begin{cor} \label{thm:lo-frs}
For every $\epsilon>0$ and $R\in(0,1)$, there is an explicit rate $R$ subcode of a low-order folded Reed-Solomon code which is list-decodable from a $1-R-\epsilon$ fraction of errors with list size $(1/\epsilon)^{O(1/\epsilon)}$, given an irreducible polynomial satisfying the conditions of Proposition~\ref{lem:decoding}. 
\end{cor}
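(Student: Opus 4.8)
The plan is to assemble Corollary~\ref{thm:lo-frs} from the three ingredients developed in this section: the interpolation step, the decoding step of Proposition~\ref{lem:decoding}, and the subspace-evasive precoding from \cite{DL}. First I would fix the parameters. Set $s = \lceil c/\epsilon \rceil$ for a suitable absolute constant $c$ so that the agreement requirement~\eqref{eq:agree} translates into a relative error fraction of $1 - R - \epsilon$: since~\eqref{eq:agree} asks for $t/N \ge \frac{1}{s+1} + \frac{s}{s+1}R = R + \frac{1-R}{s+1}$, the code corrects a fraction $1 - t/N \le (1-R)(1 - \frac{1}{s+1}) = \frac{s}{s+1}(1-R)$ of errors, and choosing $s \ge 1/\epsilon - 1$ (say) makes $\frac{s}{s+1}(1-R) \ge 1 - R - \epsilon$. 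Then set $\ell = \lceil s/\epsilon \rceil$ so that the hypothesis ``$\ell \ge s/\epsilon$'' needed to invoke the Dvir--Lovett $(s,(s/\epsilon)^s)$ subspace-evasive set over $(\F_{q^a})^{k/a}$ is met; note $\ell = O(1)$ for fixed $\epsilon$, which is the ``low-order'' regime promised in the statement.

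Next I would describe the code itself. Start from the low-order folded Reed-Solomon code with folding order $\ell$ as defined at the start of Section~\ref{sec:frs}, whose messages are polynomials $f$ of degree $<k$; identifying the coefficient vector of $f$ with a vector in $\F_q^k \cong (\F_{q^a})^{k/a}$ (here $a$ comes from the irreducible polynomial $R(X)$ supplied by hypothesis, with $\deg R \ge k$ and $\zeta X \equiv X^{q^a} \pmod{R}$), restrict the messages to lie in the subspace-evasive set $S \subseteq (\F_{q^a})^{k/a}$ of \cite{DL}, which has size $q^{(1-\epsilon)k}$ and is $(s, (s/\epsilon)^s)$-subspace-evasive over $\F_{q^a}$. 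Because $|S| = q^{(1-\epsilon)k}$ and the ambient folded code has block length $\ell N = q-1$, the rate of this subcode is $\frac{(1-\epsilon)k}{\ell N} = (1-\epsilon) R_{\text{outer}}$; I would absorb the $(1-\epsilon)$ factor by starting from an outer rate slightly above $R$ (replacing $\epsilon$ by $\epsilon/2$ throughout in the usual way) so that the final rate is exactly $R$ — or, more cleanly, just state it as rate $R$ with the error fraction $1 - R - \epsilon$, which is how the corollary is phrased. I also need the explicitness of $S$ from Corollary~\ref{lem:se}/Theorem~\ref{thm:DL-statement}, and the fact (noted in \cite{DL}) that membership in $S$ and enumeration of $S \cap (\text{affine subspace})$ are polynomial-time computable.

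The decoding argument then chains together the earlier lemmas: given a received word within the stated radius, the agreement $t$ satisfies~\eqref{eq:agree}, so by the interpolation lemma a nonzero $Q$ of the prescribed degrees exists, and by Lemma~\ref{lem:interp-sat} every codeword within the radius satisfies~\eqref{eq:interp-sat}; by Proposition~\ref{lem:decoding} the set of all degree-$<k$ polynomials $f$ satisfying~\eqref{eq:interp-sat} forms an $\F_{q^a}$-affine subspace of dimension at most $s-1$. Intersecting this affine subspace with $S$ and using the $(s,(s/\epsilon)^s)$-subspace-evasiveness of $S$ (an affine subspace of dimension $\le s-1$ is contained in a linear subspace of dimension $\le s$, after translating) bounds the number of surviving candidates by $(s/\epsilon)^s = (1/\epsilon)^{O(1/\epsilon)}$, which is the claimed list size; all steps run in $\poly(q)$ time.

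The main obstacle I anticipate is not conceptual but bookkeeping: making sure the affine-versus-linear distinction in the subspace-evasive guarantee is handled correctly (Dvir--Lovett's sets are evasive against affine subspaces, which is exactly what Proposition~\ref{lem:decoding} outputs, so this should go through cleanly), and verifying that the parameter choices $s = O(1/\epsilon)$, $\ell = O(s/\epsilon) = O(1/\epsilon^2)$ simultaneously satisfy (a) the rate loss being at most $\epsilon$, (b) the error fraction being at least $1 - R - \epsilon$, and (c) the precondition $\ell \ge s/\epsilon$ of the \cite{DL} construction and $\deg R \ge k$ from the hypothesis — i.e., that there is no circular dependence among the constants. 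A secondary point to state carefully is that the corollary is conditional on being handed the irreducible polynomial $R(X)$ with $\zeta X \equiv X^{q^a} \pmod R$; I would simply carry this hypothesis verbatim from Proposition~\ref{lem:decoding} into the corollary statement, as is already done.
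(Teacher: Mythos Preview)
Your proposal is correct and matches the paper's own argument essentially line for line: set $s=O(1/\epsilon)$ and $\ell=O(s/\epsilon)$, precode messages into the Dvir--Lovett $(s,(s/\epsilon)^s)$ subspace-evasive set over $(\F_{q^a})^{k/a}$ of size $q^{(1-\epsilon)k}$, then chain the interpolation lemma, Lemma~\ref{lem:interp-sat}, and Proposition~\ref{lem:decoding} to confine the candidates to an $\F_{q^a}$-affine subspace of dimension $\le s-1$, and intersect with the evasive set to get list size $(s/\epsilon)^s=(1/\epsilon)^{O(1/\epsilon)}$. Your bookkeeping remarks (affine vs.\ linear, absorbing the $(1-\epsilon)$ rate factor, carrying the irreducible-polynomial hypothesis) are all handled the same way implicitly in the paper.
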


\begin{rmk}
By using Corollary~\ref{lem:se} instead of the results of \cite{DL},
we can give a similar guarantee which yields a {\em linear} subcode,
but with a larger list size guarantee of $q^{\poly(1/\epsilon)}$.

The techniques of~\cite{GX} using subspace designs could also be
applied directly to the case of low-order folding, with a resulting
list size of $n^{\poly(1/\epsilon)}$. We are able to get an
improvement using the observation that the space of candidates is
actually a low-dimensional subspace over a much larger field.
\end{rmk}

\subsection{Constructing high-degree irreducibles}
\label{sec:irred}
The decoding algorithm of the previous section relied on working modulo a high-degree irreducible factor of $X^{q^a}-\zeta X$. In what follows, we consider the problem of finding such a factor efficiently.
\begin{prop} \label{lem:deg-cond}
For $\zeta\in \F_q$ of order $\ell$, the irreducible factors over $\F_q[X]$ of
\[X^{q^a-1}-\zeta \]
have degree dividing $a\ell$. In particular, all roots of $X^{q^a-1}-\zeta$ lie in $\F_{q^{a\ell}}$. 
\end{prop}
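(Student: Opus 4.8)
The plan is to analyze the splitting field of $X^{q^a-1} - \zeta$ over $\F_q$ by understanding where its roots live. Let $\beta$ be a root in $\overline{\F_q}$, so $\beta^{q^a-1} = \zeta$. The key is to show $\beta \in \F_{q^{a\ell}}$, i.e. that $\beta^{q^{a\ell}} = \beta$; since $\beta \ne 0$ this is equivalent to $\beta^{q^{a\ell}-1} = 1$. So I would compute $\beta^{q^{a\ell}-1}$ in terms of $\zeta$ and use $\zeta^\ell = 1$.

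\smallskip
The central computation: write $q^{a\ell} - 1 = (q^a - 1)(q^{a(\ell-1)} + q^{a(\ell-2)} + \dots + q^a + 1)$. Hence
\[
\beta^{q^{a\ell}-1} = \bigl(\beta^{q^a-1}\bigr)^{\,1 + q^a + \dots + q^{a(\ell-1)}} = \zeta^{\,1 + q^a + \dots + q^{a(\ell-1)}}.
\]
Now $\zeta \in \F_q$, so $\zeta^{q} = \zeta$ and therefore $\zeta^{q^a} = \zeta$; more generally $\zeta^{q^{ai}} = \zeta$ for every $i$. Thus the exponent sum collapses: $\zeta^{\,1 + q^a + \dots + q^{a(\ell-1)}} = \zeta^{\ell} = 1$, since $\zeta$ has order $\ell$. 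Therefore $\beta^{q^{a\ell}-1} = 1$, so every root $\beta$ lies in $\F_{q^{a\ell}}$. This immediately gives the ``in particular'' clause.

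\smallskip
For the divisibility statement about the degree of irreducible factors, I use the standard fact that if an irreducible polynomial over $\F_q$ has a root in $\F_{q^N}$, then its degree divides $N$: indeed, the root generates a subfield $\F_{q^d}$ with $d$ equal to the degree, and $\F_{q^d} \subseteq \F_{q^N}$ forces $d \mid N$. Applying this with $N = a\ell$ to each irreducible factor of $X^{q^a-1}-\zeta$ (whose roots we have just shown lie in $\F_{q^{a\ell}}$) yields that each such degree divides $a\ell$, completing the proof.

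\smallskip
I do not anticipate a genuine obstacle here; the only point requiring a little care is the factorization $q^{a\ell}-1 = (q^a-1)\sum_{i=0}^{\ell-1} q^{ai}$ and the observation that $\zeta$, being in the base field $\F_q$, is fixed by the $q^a$-power Frobenius — both are routine. One should also note the trivial case $\zeta = 0$ is excluded since $\zeta$ has order $\ell$ (assuming $\ell \ge 1$), so $\beta \ne 0$ and division is legitimate.
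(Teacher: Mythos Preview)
Your proof is correct and follows essentially the same approach as the paper: both hinge on the factorization $q^{a\ell}-1 = (q^a-1)\sum_{i=0}^{\ell-1} q^{ai}$ together with the fact that $\zeta$ has order $\ell$ dividing $q-1$. The only cosmetic difference is that the paper phrases this as the integer divisibility $(q^a-1)\ell \mid q^{a\ell}-1$ (and hence $X^{q^a-1}-\zeta \mid X^{q^{a\ell}}-X$), whereas you work directly with a root $\beta$ and compute $\beta^{q^{a\ell}-1}=\zeta^{\ell}=1$.
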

\begin{proof} As $X^{(q^a-1)\ell} \equiv 1\pmod{X^{q^a-1}-\zeta}$, it is enough to see that $(q^a-1)\ell$ divides $q^{a\ell}-1$. This implies that $X^{q^a-1}-\zeta$, and thus all of its irreducible factors, divides $X^{q^{a\ell}}-X$. 

As $\ell\mid q-1$, we have
\[\frac{q^{a\ell}-1}{q^a-1}=q^{a(\ell-1)} + q^{a(\ell-2)} + \dotsb +q^a +1\equiv 0\pmod{\ell} \ . \qedhere \]
\end{proof}

\begin{cor} \label{lem:a}
If $a$ and $\ell$ with $a>2\ell$ are distinct primes, at least half of the roots of $X^{q^a-1}-\zeta$ have irreducible polynomials of degree $a\ell$. 
\end{cor}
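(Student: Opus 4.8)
The plan is to count the roots of $g(X):=X^{q^a-1}-\zeta$ directly in $\overline{\F}_q$ and show that all but a $o(1)$ fraction of them generate $\F_{q^{a\ell}}$ over $\F_q$. First I would note that $g$ is separable: its derivative $(q^a-1)X^{q^a-2}$ is nonzero in characteristic $p$ (since $q^a-1\equiv-1\pmod p$) and shares no root with $g$, because its only root is $0$ while $g(0)=-\zeta\neq 0$. So $g$ has exactly $q^a-1$ distinct roots. By Proposition~\ref{lem:deg-cond}, each such root has degree over $\F_q$ dividing $a\ell$, and as $a,\ell$ are distinct primes that degree lies in $\{1,a,\ell,a\ell\}$; equivalently every root lies in $\F_{q^{a\ell}}$, and a root has irreducible polynomial of degree $a\ell$ unless it already lies in one of the proper subfields $\F_{q^a}$ or $\F_{q^\ell}$ (note $\F_q\subseteq\F_{q^\ell}$). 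Thus it suffices to bound the number of roots of $g$ lying in $\F_{q^a}\cup\F_{q^\ell}$.

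The two observations that do the work are: (i) \emph{no root lies in $\F_{q^a}$}, since a root $x\in\F_{q^a}$ is nonzero (as $\zeta\neq 0$) and hence satisfies $x^{q^a-1}=1$, forcing $\zeta=1$, contradicting that $\zeta$ has order $\ell\ge 2$; and (ii) \emph{at most $q-1$ roots lie in $\F_{q^\ell}$}. For (ii), set $r:=a\bmod\ell$; since $\ell\nmid a$ (distinct primes) we have $1\le r\le\ell-1$, and because $q^\ell\equiv 1\pmod{q^\ell-1}$ the exponent reduces as $x^{q^a-1}=x^{q^r-1}$ for all $x\in\F_{q^\ell}^*$. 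The homomorphism $x\mapsto x^{q^r-1}$ on the cyclic group $\F_{q^\ell}^*$ has kernel of size $\gcd(q^r-1,\,q^\ell-1)=q^{\gcd(r,\ell)}-1=q-1$, so the fiber over $\zeta$ has size $0$ or $q-1$.

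Combining (i) and (ii), at most $q-1$ of the $q^a-1$ roots of $g$ fail to generate $\F_{q^{a\ell}}$, so at least $q^a-q$ roots have irreducible polynomial of degree exactly $a\ell$. It then remains only to verify $q^a-q\ge\tfrac12(q^a-1)$, i.e.\ $q^a\ge 2q-1$, which holds with plenty of room since $a>2\ell\ge 4$ and $q\ge 2$. I do not expect a real obstacle; the one point to handle carefully is the exponent reduction in step (ii)---checking that reducing $q^a-1$ modulo $q^\ell-1$ genuinely yields $q^r-1$ with $0<r<\ell$, and that $\gcd(r,\ell)=1$---which is exactly where the primality of $\ell$ enters, and it is this that keeps the count of "bad'' roots down to $q-1$ rather than the full $q^\ell-1$.
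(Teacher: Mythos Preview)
Your argument is correct and follows the same skeleton as the paper's proof: invoke Proposition~\ref{lem:deg-cond} to restrict the degrees to $\{1,a,\ell,a\ell\}$, rule out degrees $1$ and $a$ by observing that any such root would satisfy $x^{q^a-1}=1\neq\zeta$, and then bound the number of roots lying in $\F_{q^\ell}$. The paper simply uses the crude bound $q^\ell$ for this last count (the size of the field, combined with separability), whereas your exponent-reduction step---reducing $q^a-1$ modulo $q^\ell-1$ to $q^r-1$ and computing $\gcd(q^r-1,q^\ell-1)=q-1$---yields the much sharper bound of $q-1$. This refinement is not needed for the corollary as stated, but it in fact shows that all but a $O(q^{1-a})$ fraction of the roots have degree $a\ell$, rather than merely half.
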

\begin{proof} By Proposition~\ref{lem:deg-cond}, all irreducible factors of $X^{q^a-1}-\zeta$ have degrees in the set $\{1,a, \ell, a\ell\}$. No irreducible factor has degree $1$ or $a$, because any irreducible of degree $1$ or $a$ divides $X^{q^a-1}-1$ and therefore does not divide $X^{q^a-1}-\zeta$ for $\zeta\neq 1$. 

Because $X^{q^a-1}-\zeta$ has no repeated factors, it has at most $q^\ell$ roots which lie in $\F_{q^\ell}$ (and hence have irreducible polynomials of degree $\ell$. 

Thus, under the assumptions on $a$ and $\ell$, $X^{q^a-1}-\zeta$ has at least $(q^a-q^\ell-1)\geq q^\ell$ roots of degree $a\ell$. Thus at least half of of $X^{q^a-1}-\zeta$'s roots have irreducible polynomials of degree $a\ell$. 
\end{proof}

In particular, by choosing $a$ to be a prime in the range $[k/\ell, 2k/\ell]$, we have $k\leq a\ell\leq 2k$, so that an irreducible factor of $X^{q^a-1}-\zeta$ will satisfy the conditions of Proposition~\ref{lem:decoding}. The next section will show that we cannot hope to improve much on the value of $a$.
\medskip

Given a value for $a$ for which $X^{q^a-1}-\zeta$ has many degree $a\ell$ factors, the problem remains to compute one. In what follows, we describe one randomized approach. 
\medskip

Recall that $a$ and $\ell$ are primes, and that we are trying to find a degree $a\ell$ factor of $X^{q^a-1}-\zeta$. The idea is to sample a root of $X^{(q^a-1)\ell}-1$. Consider the following procedure: 
\begin{enumerate}
\itemsep=0ex
\item Sample $\beta\in (\F_{q^a})^*$ uniformly at random. 
\item Compute the roots $\rho_1,\dotsc, \rho_\ell$ of $X^\ell -\beta$, which lie in $\F_{q^{a\ell}}$ by Proposition~\ref{lem:deg-cond}. This can be done in time $\tilde{O}(n^2\log (q^a) \log^{-1}\epsilon)$ with failure probability $\epsilon$ using a variant of Berlekamp's algorithm (see, for example, \cite{K92}). 
\item Compute $\rho_i^{q^a-1}$ for each $i$ and output the minimal polynomial of $\rho_i$ over $\F_q$ if $\rho_i^{q^a-1}=\zeta$. 
\end{enumerate}

First note that steps 1--2 sample each root of $X^{(q^a-1)\ell}-1$ uniformly. Each $\rho_i$ computed in step 2 satisfies $\rho_i^\ell\in(\F_{q^a})^*$, so $\rho_i$ is a root of $X^{(q^a-1)\ell}-1$. Conversely, each nonzero $\beta$ yields $\ell$ distinct roots of $X^\ell-\beta$, which are distinct for distinct $\beta$, yielding $(q^a-1)\ell$ roots. 

Therefore, with probability $1/\ell$, we will find a root $\rho$ of $X^{q^a-1}-\zeta$. By Corollary~\ref{lem:a}, $\rho$'s minimal polynomial has degree $a\ell$ with probability at least $1/2$. 

We can thus conclude that, with probability at least $\frac{1}{2\ell} - \epsilon$, we find an irreducible factor of $X^{q^a-1}-\zeta$ of degree $a\ell$.

\subsection{Relationship to Reed-Solomon list-decoding} 
\label{sec:lb}

The original motivation for studying low-order folding was the following reduction from Reed-Solomon codes. 

Given a polynomial $f$ of degree $<k/\ell$ evaluated at distinct points $1,\gamma^\ell, \gamma^{2\ell},\dotsc, \gamma^{N\ell}$, we can think of it as a degree $<k$ polynomial $g(X)=f(X^\ell)$. For $\zeta$ of order $\ell$, we have that $g(\zeta^i X)=g(X)$ for every $i$. In particular, the associated low-order folded Reed-Solomon codeword encoding $g(X)$ is simply
\begin{equation}
\label{eq:rs-red}
\begin{bmatrix}
f(1) & f(\gamma^\ell) & \hdots & f(\gamma^{N\ell})\\
f(1) & f(\gamma^\ell) & \hdots & f(\gamma^{N\ell})\\
\vdots & \vdots & \ddots & \vdots \\
f(1) & f(\gamma^\ell) & \hdots & f(\gamma^{N\ell})
\end{bmatrix}. 
\end{equation}

Notice that if $f(\gamma^{i\ell})$ is correct, then the entire $i$th column is correct, so an algorithm to list-decode the low-order folded RS code from an $\eta$ fraction of errors will also list-decode the Reed-Solomon code with evaluation points $(1,\gamma^\ell,\dots,\gamma^{N\ell})$  from the same error fraction. 
\medskip

This reduction also helps to show that the precoding used to conclude Corollary~\ref{thm:lo-frs} is necessary for a polynomial list size. 
To see this, consider the behavior of the algorithm on a transmitted codeword as in Equation~\eqref{eq:rs-red}. If there is enough agreement, the algorithm will interpolate polynomials $A_i(X)$ satisfying 
\begin{align}
0&=A_0 + A_1(X)g(X) + A_2(X) g(\zeta X) + \dotsb + A_s(X)g(\zeta^{s-1} X)\\
    &=A_0(X) + g(X)\sum_{i=1}^s A_i(X).\label{eq:bad}
\end{align}

If $\sum_{i>0} A_i(X)\neq 0$, then $g(X)$, and thus $f(X)$, can be recovered {\em uniquely} as $A_0(X)/\sum_{i>0} A_i(X)$; however, this will not be possible in general outside of the unique decoding radius. If $\sum_{i>0} A_i(X)$ is $0$, then $A_0(X)=0$ as well and {\em any} function which is a polynomial of $X^\ell$ satisfies Equation~\eqref{eq:bad}, and in particular the output list must have size at least $q^{k/\ell}$. Recall that $\ell$ is a constant in our application. 

This implies that without precoding, the dimension of the list output by Proposition~\ref{lem:decoding} over $\F_q$ must be $\Omega(k/\ell)$. Note that for the value $a=\theta(k/\ell)$ found in Section~\ref{sec:irred}, the list size before precoding would be $O(ks/\ell)$. 

\section{Conclusion and open questions}
\label{sec:conc}

We have given an explicit construction of list-decodable rank-metric
and subspace codes, which were obtained by restricting known codes to
carefully chosen subcodes. However, our results give no insight into
whether the Gabidulin and KK codes can be themselves list-decoded beyond half the
distance. We close with the following natural open problems.
\begin{itemize}
\itemsep=0ex
\item[-] Is it combinatorially feasible to list-decode Gabidulin codes {\em themselves} beyond half the distance? 
We note that it was recently shown that there is no analog of the classical Hamming-metric Johnson bound in the world of rank-metric codes always guaranteeing list-decodability beyond half the minimum distance~\cite{wachter-zeh}. Therefore, a proof of list-decodability past the unique decoding radius (say for the Gabidulin code) must account for the code structure beyond just the minimum distance.

\item[-] Assuming it is combinatorially feasible, can we give an efficient algorithm to list-decode Gabidulin codes without using subcodes or special evaluation points? 

\item[-] Currently, for rate $R$ codes, we do not know where in the range $(1-\sqrt{R},1-R)$ the list-decoding radius of Reed-Solomon codes lies, and where in the range $[(1-R)/2,1-\sqrt{R}]$ the list-decoding radius of Gabildulin codes lies. Is there a relationship between these questions?

\item[-] Can one construct better subspace-evasive sets to give an {\em explicit} code that is list-decodable from a fraction $1-R-\epsilon$ of errors with $\mathrm{poly}(1/\epsilon)$ list-size? We only known a list-size upper bound that is exponential in $1/\epsilon$ for current explicit constructions, whereas a list-size of $O(1/\epsilon)$ can be obtained with a Monte Carlo construction~\cite{GW,GX12,GX}. This question is open for errors in the usual Hamming metric also.
\end{itemize}

\section*{Acknowledgment}
We thank Antonia Wachter-Zeh for bringing to our attention the lack of a Johnson-type bound for list decoding rank-metric codes~\cite{wachter-zeh}.


\appendix

\section{Explicit list-decodable subspace codes}
\label{sec:KK}

\subsection{The operator channel and subspace codes}

For a vector space $W$, let $\mathcal{P}(W)$ denote the set of all subspaces of $W$, and $\mathcal{P}_n(W)$ the set of all $n$-dimensional subspaces of $W$. 

We recall the definition of the operator channel from~\cite{KK08}. 

\begin{defn} An {\it operator channel} $C$ associated with the {\it ambient space} $W$ is a channel with input and output alphabet $\mathcal{P}(W)$. The channel input $V$ and output $U$ are related by 
\[U=\mathcal{H}_k(V) + E,\]
where $k=\dim(U\cap V)$, $E$ is an error subspace (wlog $E$ may be taken such that $E\cap V=\{0\}$), and $\mathcal{H}_k(V)$ is an operator returning an arbitrary $k$-dimensional subspace of $V$. 

In transforming $V$ to $U$, we say that operator channel commits $r=\dim(V)-k$ {\it deletions} and $t=\dim (E)$ {\it insertions}. 
\end{defn}

A {\it subspace code} $C$ is a subset of $\mathcal{P}_n(\F_q^t)$ for some $n$. We define the {\it rate} of a subspace code to be 
\[R(C)=\frac{\log_q\abs{C}}{nt}.\]

\subsection{The K\"otter-Kschischang (KK) code}

Our constructions will be subcodes of the KK code (as introduced in~\cite{KK08}), 
which we now define. 

For $n$ dividing $t$, let $\F_{h^t}$ extend $\F_h$, and let $\alpha_1,\dotsc, \alpha_n\in\F_{h^t}$ generate the subfield $\F_{h^n}:=\F_q$. 

Set $m=t/n$. 
Then the $(n,k,t)$ \textbf{KK code} encodes an $\F_h$-linearized polynomial over $\F_{q^m}=\F_{h^t}$ of $q$-degree $<k$ by 
\[f(X)\mapsto\spam\{(\alpha_i, f(\alpha_i)\}_{i=1}^n.\] 

The encoding of $f$ is an $n$-dimensional vector space in the ambient space of dimension $n + t$ over $\F_h$. 

When $k<n$, this code has distance $2(n-k+1)$ and rate 
\[\frac{\log_h q^{mk}}{n(n+t)} = \frac{k}{n} \left(\frac{1}{1+n/t}\right)\approx \frac{k}{n} \quad \text{(when $n\ll t$)}.\]

If the channel commits $\leq \mu$ deletions and $\leq \rho$ insertions, where $s\mu + \rho < s(n-k+1)$, Guruswami and Xing~\cite{GX} give a list-decoding algorithm which outputs a $(s-1, m,k)$-periodic subspace in $\F_{q}^{mk}$ containing all candidate messages. 

\subsection{List-decodable subcodes}

By restricting the coefficients of the message polynomial $f$ to come from distinct $H_1,\dotsc, H_k$ from the $\bigl(s, 2(m-1)s/\epsilon,t\bigr)$-subspace design of Theorem~\ref{thm-design}, and setting $m\approx s/\epsilon$, we can prune the list down to a $\F_h$-subspace of dimension $O(s^2/\epsilon^2)$. 

Notice that the $H_i$'s are $\F_h$-linear subspaces, so the restricted subcode is linear. In summary, we have: 

\begin{thm}\label{thm:KK} For every $\epsilon>0$ and integer $s>0$, there exists an explicit linear subcode of the $\bigl(n,k,sn/\epsilon\bigr)$ KK code of rate $(1-\epsilon)k/n$ which is list-decodable from $\rho$ insertions and $\mu$ deletions, provided $\rho + s\mu < s(n-k+1)$. 

Moreover, the output list is contained in an $\F_h$-subspace of dimension $O(s^2/\epsilon^2)$. 
\end{thm}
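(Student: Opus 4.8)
The plan is to assemble Theorem~\ref{thm:KK} from the three ingredients already in place: the list-decoding algorithm for KK codes from~\cite{GX} (which outputs a periodic subspace), the $\F_h$-subspace design of Theorem~\ref{thm-design}, and Proposition~\ref{lem:periodic} (which bounds the intersection of a periodic subspace with a product of subspace-design members). The construction is precoding: we restrict the message polynomial $f = \sum_{i=0}^{k-1} f_i X^{q^i}$ so that each coefficient $f_i$ lies in $H_{i+1}$, where $H_1,\dots,H_k$ are $k$ distinct members of the $\bigl(s, 2(m-1)s/\epsilon, t\bigr)$ $\F_h$-subspace design of Theorem~\ref{thm-design}. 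I would first fix parameters: set $t = sn/\epsilon$, so $m = t/n = s/\epsilon$; then $s \le \epsilon m/4$ fails by a constant factor, so in fact I should take $m = 4s/\epsilon$ (i.e. $t = 4sn/\epsilon$) to meet the hypothesis of Theorem~\ref{thm-design}, or simply absorb the constant into the $O(\cdot)$ and the rate loss — I would state $m \approx s/\epsilon$ and note the design requires $M = q^{\Omega(\epsilon m/s)} = q^{\Omega(1)} \ge k$ members, which holds for $n$ (hence $q = h^n$) large enough relative to $k \le n$.

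Next I would verify the \textbf{rate}. Each $H_i$ has codimension at most $2\epsilon tm$ in $\F_h^{tm}$, but here the relevant ambient space for the coefficient $f_i$ is $\F_{q^m} = \F_h^{nm} = \F_h^{t}$; wait — I need to be careful that the subspace design lives in the right space. In Theorem~\ref{thm-design} the subspaces sit in $\F_h^{tm}$ with $q = h^t$; in the KK setting the coefficients lie in $\F_{q^m} = \F_{h^{nm}} = \F_{h^t}$ where now the roles are played with ``$t$'' of the design equal to $n$ and ``$m$'' of the design equal to $m = t/n$. So each $H_i \subseteq \F_h^{nm} = \F_h^t$ has codimension at most $2\epsilon \cdot n \cdot m = 2\epsilon t$. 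Restricting all $k$ coefficients costs a factor $(1 - 2\epsilon)$ in the number of message symbols, giving rate $(1-2\epsilon)\tfrac{k}{n} \cdot \tfrac{1}{1+n/t}$; since $n/t = \epsilon/s$ is tiny, this is $(1-O(\epsilon))k/n$, which I would clean up to $(1-\epsilon)k/n$ by adjusting constants. Linearity of the subcode is immediate because each $H_i$ is $\F_h$-linear and the encoding map $f \mapsto \spam\{(\alpha_i, f(\alpha_i))\}$ is $\F_h$-linear in the coefficients.

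Then the \textbf{decoding guarantee}: given $\rho$ insertions and $\mu$ deletions with $\rho + s\mu < s(n-k+1)$, the~\cite{GX} algorithm returns an $(s-1, m, k)$-periodic affine subspace $H \subseteq \F_q^{mk}$ containing all candidate messages' coefficient vectors. Every valid message in our subcode also lies in $H_1 \times \dots \times H_k$, hence in $H \cap (H_1 \times \dots \times H_k)$. By Proposition~\ref{lem:periodic} (applied with parameter $s-1$, and noting a $(s-1,A,t)$-design is also used; since our design is an $(s, \cdot)$-design it is a fortiori an $(s-1,\cdot)$-design), this intersection is an $\F_h$-affine subspace of dimension at most $2(m-1)s/\epsilon = O(s^2/\epsilon^2)$, using $m = O(s/\epsilon)$. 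This is exactly the claimed bound, and the decoder can compute a basis for it by linear algebra in polynomial time.

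The \textbf{main obstacle} I anticipate is purely bookkeeping rather than conceptual: correctly matching the two different $(m,t)$-labelings — the design's ambient $\F_h^{tm}$ with $q = h^t$ versus the KK code's coefficient field $\F_{q^m} = \F_{h^{nm}}$ with $q = h^n$ — and making sure the parameter $m$ used for the periodicity of the~\cite{GX} output (there $m = t/n$) is the \emph{same} $m$ as in the subspace design, which it is, since both equal $t/n$. I would also need to double-check the off-by-one between $(s, \cdot)$-designs and the $(s-1,\cdot)$-periodic subspaces, but this only affects constants. Once the dictionary between the two parameter conventions is pinned down, the theorem follows by directly quoting the three earlier results with no new computation.
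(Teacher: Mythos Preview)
Your proposal is correct and follows essentially the same route as the paper: restrict each coefficient $f_i$ of the message polynomial to lie in $H_{i+1}$ from the $(s, 2(m-1)s/\epsilon, t)$ $\F_h$-subspace design of Theorem~\ref{thm-design}, invoke the~\cite{GX} decoder to get an $(s-1,m,k)$-periodic subspace, and apply Proposition~\ref{lem:periodic} with $m \approx s/\epsilon$ to bound the intersection by $O(s^2/\epsilon^2)$. Your treatment of the parameter-matching bookkeeping (the design's ``$t$'' equals the KK code's $n$, so the subspaces live in $\F_h^{nm} = \F_h^t$) and of the constant-factor adjustments for $s \le \epsilon m/4$ and the rate is in fact more careful than the paper's own terse argument, which simply says ``setting $m \approx s/\epsilon$'' and leaves these details implicit.
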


\end{document}